\documentclass[a4paper,11pt]{article}

\usepackage{jheppub} 
\usepackage{dsfont}
\usepackage{extarrows}
\usepackage{hyperref}
\usepackage{cleveref}
\usepackage{supertabular}
\usepackage{todonotes}
\usepackage{mathrsfs}
\usepackage{array}
\usepackage{lipsum}
\usepackage{physics}

\usepackage{slashed}
\usetikzlibrary{decorations.markings}

\usepackage{helvet} 
\usepackage{amsmath}
\usepackage{amssymb} 
\usepackage{amsthm}
\usepackage{url}
\usepackage{tikz}
\usepackage{tikz-cd}
\usetikzlibrary{arrows}
\usetikzlibrary{arrows.meta}
\usetikzlibrary{shapes.geometric,calc,arrows, positioning,shapes.misc,decorations.markings}

\newcommand{\newreptheorem}[2]{%
\newenvironment{rep#1}[1]{%
 \def\rep@title{#2 \ref{##1}}%
 \begin{rep@theorem}}%
 {\end{rep@theorem}}}
\makeatother
\newtheorem{lemma}{Lemma}[section]
\newreptheorem{lemma}{Lemma}
\newtheorem{theorem}[lemma]{Theorem}
\newtheorem{corollary}[lemma]{Corollary}
\newtheorem{prop}[lemma]{Proposition}

\newreptheorem{conj}{Conjecture}
\newtheorem{question}[lemma]{Question}

\newtheorem{defn}[lemma]{Definition}

\newtheorem{remark}[lemma]{Remark}

\newtheorem{exm}[lemma]{Example}

\usepackage{lscape} 
\usepackage{braket}

\usepackage[T1]{fontenc} 

\newcommand{\CH}{\Ch_{\mu_1\mu_2}(Q_F,{t},{q})}

\newcommand{\Rep}{\textbf{Rep}}

\newcommand{\be}{\begin{equation}}
\newcommand{\ee}{\end{equation}}
\newcommand{\ba}{\begin{aligned}}
\newcommand{\ea}{\end{aligned}}
\newcommand{\bea}{\begin{eqnarray}}
\newcommand{\eea}{\end{eqnarray}}

\def\mb{\mathbb}

\def\mc{\mathcal}

\def\bp{\begin{pmatrix}}
\def\ep{\end{pmatrix}}

\def\bbC{\mathbb{C}}

\def\bbG{\mathbb{G}}

\def\bbK{\mathbb{K}}

\def\bbN{\mathbb{N}}

\def\bbQ{\mathbb{Q}}
\def\bbR{\mathbb{R}}

\def\bbZ{\mathbb{Z}}
\def\CA{\mathcal{A}}

\def\CC{\mathcal{C}}

\def\CG{\mathcal{G}}
\def\CH{\mathcal{H}}

\def\CK{\mathcal{K}}
\def\CL{\mathcal{L}}
\def\CM{\mathcal{M}}

\def\CT{\mathcal{T}}

\def\CZ{\mathcal{Z}}
\def\mfr{\mathfrak}

\def\calg{$C^*$-algebra\ }
\def\calgs{$C^*$-algebras\ }

\def\hilb{{\rm Hilb}}

\def\vecg{{\rm Vec}_G}
\def\vecgo{{\rm Vec}_G^\omega}
\def\gmg{G//_{\rm Ad} G}

\title{\boldmath Categorical Symmetries via Operator Algebras}

\author[b]{Qiang Jia}
\author[c]{Ran Luo}
\author[a]{Jiahua Tian}
\author[c,d]{Yi-Nan Wang}
\author[e]{Yi Zhang}
\affiliation[a]{School of Physics, East China Normal University, \\
Shanghai, China, 200241}
\affiliation[b]{Department of Physics, Korea Advanced Institute of Science Technology, \\
Daejeon 34141, Korea}
\affiliation[c]{School of Physics, Peking University,\\
Beijing, China, 100871}
\affiliation[d]{Center for High Energy Physics, Peking University, \\
Beijing, China, 100871}
\affiliation[e]{Kavli IPMU (WPI), UTIAS, The University of Tokyo, \\
Kashiwa, Chiba 277-8583, Japan}
\emailAdd{qjia1993@kaist.ac.kr}\emailAdd{jtian1905@gmail.com}\emailAdd{ynwang@pku.edu.cn}
\emailAdd{ranluo@pku.edu.cn}\emailAdd{yi.zhang@ipmu.jp}

\abstract{We propose that the symmetry category associated to a 2D quantum field theory with 0-form $G$-symmetry with 't Hooft anomaly $k\in H^4(BG,\mathbb{Z})$ for a large class of Lie groups $G$ is the category of twisted measurable fields of Hilbert spaces over $G$ denoted by $\mathrm{Hilb}^k(G)$, which is equivalent to the category of unitary representations of $C_0(G)$ with convolution product twisted by a multiplicative bundle gerbe labeled by $k$ denoted by $\Rep^k(C_0(G))$. We find that the Drinfeld center of the symmetry category $\mathcal{Z}(\mathrm{Hilb}^{k}(G))$ equivalent to the category of unitary representations of the groupoid $C^*$-algebra of the Fell line bundle $\Sigma_k$ over the conjugation action groupoid $G//_{\rm Ad} G$, denoted by $\Rep(C^*(G//_{\rm Ad}G,\Sigma_k))$, where the twist is characterized by the transgression $\tau(k)\in H^2(G//_{\rm Ad}G,U(1))$. To the full generality, our framework applies to a Lie group $G$ that is a direct product of a compact connected Lie group and a number of $\mathbb{R}$ or $GL(1,\mathbb{C})$ factors. We compute the braiding of anyon lines in the bulk 3D SymTFT from this formalism. Finally we provide physical examples for abelian and non-abelian $G$, and discuss the physical consequences of flat gauging continuous global symmetries.}

\begin{document}
\maketitle
\flushbottom

\section{Introduction}

Despite the immense success of the SymTFT or the Topological Holography paradigm in describing both the finite group-theoretic and finite semisimple non-invertible symmetries~\cite{Witten:1998wy,Kong:2015flk,He:2016xpi,Ji:2019jhk,Kong:2020cie,Gaiotto:2020iye,Apruzzi:2021nmk,DelZotto:2022ras,Moradi:2022lqp,Ji:2021esj,Freed:2022qnc,Kaidi:2022cpf,vanBeest:2022fss,Kaidi:2023maf,Hai:2023osv,Bhardwaj:Lecture,Schafer-Nameki:2023jdn,Bhardwaj:2023fca,Huang:2023pyk,Cao:2023rrb,Bhardwaj:2023idu,Bhardwaj:2023bbf,Choi:2024tri,Luo:2025phx,Schafer-Nameki:2025fiy,Qi:2025tal,Bergman:2026lnz}, a complete, satisfactory categorical approach for continuous symmetries is still unestablished. 

On the physics side, it is natural to expect that the SymTFT for continuous symmetries is described by certain non-compact BF theory with a non-compact $B$ field, see~\cite{Brennan:2024fgj,Antinucci:2024zjp,Bonetti:2024cjk,Apruzzi:2024htg,Antinucci:2024bcm,Gagliano:2024off,Argurio:2024ewp,Jia:2025jmn,Bonetti:2025dvm} for the discussions of SymTFT action for abelian, non-abelian Lie group symmetries, for the application to spacetime (super-)symmetries~\cite{Apruzzi:2025hvs,Borsten:2025pvq,Ambrosino:2026hhv}, and the non-abelian group with anomaly~\cite{ST_Kernel}.

However, on the categorical side, a systematic formulation for the Lie group symmetry categories and an explicit calculation of their Drinfeld centers remains obscure. It is important to check the validity of the slogan \textit{SymTFT equals to the Drinfeld center of the symmetry category} $\mc{S}$ in the case of continuous Lie group symmetries, before venturing into the even wilder zoo of continuous non-invertible symmetries\cite{Bhardwaj:2022yxj,GarciaEtxebarria:2022jky,Antinucci:2022eat,Damia:2023gtc,Hsin:2024aqb,Hsin:2025ria,Delmastro:2025ksn}.

The major mathematical obstruction is to formulate the infinitely many simple objects with the finesse of incorporating the standard topology associated to the Lie group manifold, going beyond the realm of fusion categories. There has been two proposals for Lie group symmetry categories: category of skyscraper sheaves~\cite{Freed:2009qp,Jia:2025vrj} ${\rm Sky}(G)$ and category of quasi-coherent sheaves~\cite{Jia:2025vrj,Stockall:2025ngz} ${\rm QCoh}(G)$, each with its own failure. The skyscraper proposal ${\rm Sky}(G)$ does not accommodate infinite direct sums of simple objects, rendering important $d\ge 1$ submanifolds of $G$ (e.g. conjugacy classes) out of its capabilities. The quasi-coherent sheaf proposal ${\rm QCoh}(G)$ allows for infinite direct sums of simple objects, but most mathematical theorems we desire are applicable only if $G$ is an algebraic group, which is not generic enough.

In this work, we construct the symmetry category $\hilb^k(G)$ and its Drinfeld center $\CZ(\hilb^k(G))$ for compact Lie group $G$ in $2$D with anomaly $k\in H^4(BG,\bbZ)$. The centerpiece of our construction involves representations of $C^*$-algebra. Inspired by previous mathematical formulations of continuous tensor category~\cite{Marin-Salvador:2025stc} and manifold tensor category~\cite{Weis:2022egw}, we construct the a continuous tensor category with infinite number of simple objects. $\hilb^k(G)$ as representations of the twisted function space $C_0(G,\rho)$. The calculation of Drinfeld center $\CZ(\hilb^k(G))$ is given by a Fell bundle $\Sigma_k$ over the inertia groupoid $\gmg$ encoding the anomaly information. Finally, we are able to classify the simple objects in $\CZ(\hilb^k(G))$ as irreducible representations of the $C^*$-algebra associated to $\Sigma_k$.

The main mathematical result of our work is Proposition~\ref{prop:centerclassify}, the simple objects in Drinfeld center $\CZ(\hilb^k(G))$ are labeled by 
\be ([g], R )  \ee
where $[g]$ is a conjugacy class of $G$ and $R$ is an irreducible projective representation of the centralizer $C_G(g)$.

The physical conclusion of our work is clear: the symmetry category for an anomalous Lie group $G$ symmetry is $\hilb^k(G)$ with $k\in H^4(BG,\mb{Z})$ and its SymTFT is the Drinfeld center $\mc{Z}(\hilb^k(G))$ with simple objects described above, given that $G$ is a direct product of a finite number of~\footnote{There are two conditions on $G$, the first one is the injectivity of the transgression $\tau:H^4(BG,\mb{Z})\rightarrow H^3_G(G,\mb{Z})$, and the second one is the amenability condition on $G$ (automatically satisfied for all compact groups).}
\begin{itemize}
    \item compact connected Lie groups, or
    \item non-compact, connected, amenable, reductive algebraic groups (over $\mathbb{R}$ or $\mathbb{C}$), which only include the $\mb{R}\cong GL(1,\mb{R})_+$ or $GL(1,\mb{C})$ factors.
\end{itemize}
In this framework, the continuous infinite direct sum of simple objects are included in the categories $\hilb^k(G)$ and $\mc{Z}(\hilb^k(G))$, hence it is mathematically sensible to define different boundary conditions of the SymTFT, such as the ones with $G$ or Rep$(G)$ symmetries.

\paragraph{Structure of this paper:} 

In Section~\ref{sec:fusioncat}, we review the fusion categories $\vecg$ and $\vecgo$ of finite group $G$ and their Drinfeld centers in the language of $C^*$-algebras, providing intuition for the continuous case. The non-anomalous category $\vecg$ is realized as the representation category of the Hopf $C^*$-algebra $C(G)$, and its Drinfeld center is derived via the inertia groupoid. For the anomalous category $\vecgo = \Rep (C^\omega(G))$, we incorporate a twist via a Fell bundle to construct its Drinfeld center.

In Section~\ref{sec:non-anomalous}, we construct the Lie group symmetry category $\hilb(G)$ and its Drinfeld center. We introduce the non-anomalous Lie group symmetry category $\hilb(G)$ as the representation category of the Hopf $C^*$-algebra $C_0(G)$. Then we present a calculation of $\CZ(\hilb(G))$ through the construction of quantum double $DC_0(G)$, and demonstrate the equivalence with Fell bundle construction.

In Section~\ref{sec:anomalous}, we construct the anomalous symmetry category $\hilb^k(G)$ and its Drinfeld center. We first demonstrate how the anomaly $k\in H^4(BG,\bbZ)$ could be transgressed into the Fell bundle data. Then we construct the symmetry category as a continuous tensor category from \calg $C_0(G,\rho)$ and correspondences. Thereafter we explain our intuition of constructing $\CZ(\hilb^k(G))$ as the representation category associated to the Fell bundle \calg $C^*(\gmg,\Sigma_k)$, and calculate the irreducible representations of $C^*(\gmg,\Sigma_k)$, which are simple objects of $\CZ(\hilb^k(G))$.

In Section~\ref{sec:physicalexm}, we apply continuous SymTFT on $2$D scalar field theories with an explicit focus on flat gauging transitions between different symmetry sectors. In a $2$D compact scalar theory, we demonstrate the SymTFT of anomalous $U(1)_m \times U(1)_w$ global symmetry from the Fell bundle construction. Through different condensations, the theory can be transformed either into a non-compact scalar with $\mathbb{R}$ symmetry or into a compact scalar with a modified radius. At the self-dual radius $R=1$ with the enhanced $SO(4)$ symmetry, flat gauging a diagonal $SO(3)$ subgroup results in the Runkel-Watts model, a non-rational $c=1$ CFT.

In Section~\ref{sec:outlook}, we discuss the possible generalizations and applications into more generic physics scenarios. Also, we put forward a series of open mathematical questions for a rigorous definition of such SymTFT and the classification of boundary conditions.

The appendices provide the definitions for most mathematical objects we applied in the main text. In Appendix~\ref{app:OperA}, we introduce the basic notions of $C^*$-algebra and representation. In Appendix~\ref{app:bimod}, we introduce Hilbert (bi)modules of $C^*$-algebras and correspondences between $C^*$-algebras. In Appendix~\ref{app:continuoustc}, we introduce the continuous tensor category proposed in~\cite{Marin-Salvador:2025stc}, and state how the data of $C^*$-algebra and correspondences manifests as the categorical structure. In Appendix~\ref{app:multbdgb}, we introduce multiplicative bundle gerbes and its classification. In Appendix~\ref{app:fellbundle}, we introduce the definitions and constructions for Fell bundle, the associated $C^*$-algebra and its representations. In Appendix~\ref{app:injectivetransgression}, we prove that for compact connected Lie groups, the transgression from $H^4(BG,\bbZ)$ to the equivariant cohomology $H^3_G(G,\bbZ)$ is always injective.

\section{Intuition from Fusion Category}\label{sec:fusioncat}
There are various ways to construct a fusion category and the Drinfeld center of it. We will give a brief introduction of these formalisms for finite group symmetries in this section, and then generalize these intuitions into generic Lie groups in subsequent sections. 

Within the scope of this work, we focus on the fusion category ${\rm Vec}^\omega_G$ of $G$-graded vector spaces for finite group $G$ with a twist $\omega\in H^3_{\rm grp}(G,U(1))$ signifying the 't Hooft anomaly of the $G$ 0-form symmetry in the 2D boundary physical system. In this section, we will always assume $G$ to be a finite group.

\subsection{Hopf $C^*$-Algebra Description}
The category $\vecg$ can be equivalently formulated as a representation category of the \calg $C(G)$ of bounded $\bbC$-valued functions over $G$. 

This Abelian \calg $C(G)$ is given by the following data:
\begin{itemize}
    \item a set of elements $\{ \text{bounded functions \ } f:G\to \bbC\}$;
    \item the associative multiplication $(f\cdot f') (g) = f(g)f'(g)$ for every $f,f'\in C(G)$;
    \item the involution $f^*(g) = \overline{f(g)}$
    \item the supremum norm $||f|| = \sup_{g\in G} |f(g)|$, one can easily verify $||f^*\cdot f||= ||f||^2$.
\end{itemize}
According to the results of~\cite{Takesaki:1979,Farah:2019}, a (non-degenerate $*$) representation $\pi \in  \Rep(C(G))$ is given by a map from $C(G)$ to $\mathscr{B}(\CH)$, where $\CH$ is a $G$-graded vector space\footnote{Rigorously speaking, this should be $G$-graded Hilbert space. But for finite group $G$ they are in one to one correspondence.}, for each $\ket{\xi}\in \CH_g$ we have
\be
\pi(f)\ket{\xi}  = f(g)\cdot \ket{\xi} \ .
\ee

The morphisms of $\Rep(C(G))$ are intertwiners between representations, which are equivalent to the data of $G$-graded linear maps. By now the category $\Rep (C(G))$ does not admit any monoidal structure. To define a monoidal structure we need to know the tensor product of representations $\pi_i:C(G) \to \mathscr{B}(\CH_i)$. A naive tensor product gives
\be
(\pi_1\otimes \pi_2):  C(G)\otimes C(G) \to \mathscr{B}(\CH_1\otimes\CH_2) \ , \ee
which is not a representation in $\Rep (C(G))$.
A natural way to define such tensor product is through defining a coproduct structure on the algebra $C(G)$
\be\ba
\Delta : C(G) &\to C(G)\otimes C(G) \cong C(G\times G) \\
f   &\mapsto \Delta(f)= \sum f^{(1)}\otimes f^{(2)}:(g,h)\mapsto f(gh) \ ,
\ea\ee
upon which we can give the monoidal structure of representation category
\be
(\pi_1\otimes_{\Rep(C(G))} \pi_2)(f) := (\pi_1\otimes \pi_2)\circ\Delta :C(G) \to C(G)\otimes C(G) \to \mathscr{B}(\CH_1\otimes \CH_2)  \ .
\ee
To make everything consistent, the counit and antipode shall also be assigned
\be\ba
&\varepsilon:C(G) \to \bbC \ , \ \varepsilon(f) = f(e)\\
&S:C(G) \to C(G) \ , \ S(f)(g) = f(g^{-1}) \ .
\ea\ee
These additional structures would render $C(G)$ a Hopf \calg. With the above data, there is a monoidal equivalence of fusion categories $\vecg \cong \Rep(C(G))$~\cite{Etingof:2005,Mason:2017klg,etingof2017tensor,Khovanov:2015acq}.

The Drinfeld center of $\vecg \cong \Rep(C(G))$ can be formulated as the representation category of the extension $C(G)\rtimes_{\rm Ad} G$ by conjugate action 
\be\ba
(g\rhd f) (h) &= f(ghg^{-1})
\ea\ee
This \calg is isomorphic to the Drinfeld double or quantum double, usually denoted as $D(G)$. As a Hopf algebra, $D(G)$ gives product and coproduct best described with the basis $C(G) = {\rm span}\{e_g: g\mapsto 1 ,h(\ne g)\mapsto 0\}$ and the group algebra $\bbC[G] = {\rm span}\{g \}$,
\be\ba
(e_g,x) \cdot (e_h,y) = (\delta_{xgx^{-1},h} e_g ,xy)\\
\Delta(e_g,x) = \sum_{hk = g} (e_h,x) \otimes (e_k,x) \ .
\ea\ee
It is well-known that $\CZ(\vecg) = \Rep(D(G))$~\cite{Etingof:2024abc}. 
Now, it is intuitive to fathom if this framework for symmetry category and Drinfeld center could be generalized to Lie group symmetries, with caution for the infinite dimensionality of Hopf $C^*$-algebra over a smooth base manifold.

\subsection{Inertia Groupoid}
There is another formulation for calculating the Drinfeld center of $\vecg$, known as the tube category or the inertia groupoid~\cite{KAWASAKI197875,Lupercio:2004bro}.

For a group $G$, the inertia groupoid is formally defined by the adjoint action of the group $G$ on itself via conjugation. Concretely, this groupoid (denoted $\CG = G//_{\rm Ad}G$) is a category given by the following data:
    \begin{itemize}
        \item The objects $\CG^{(0)} := {\rm Obj}(\CG)= G$.
        \item The morphisms ${\rm Hom}_{\CG}(g,h) = \{ x\in G| xgx^{-1} = h \}$, if $g$ and $h$ are not in the same conjugacy class, ${\rm Hom}_{\CG}(g,h) = \emptyset$. For each 1-morphism $x\in {\rm Hom}_\CG (g,xgx^{-1})$, we label it with $(g,x)$. We also denote the entire morphism space $\CG^{(1)}:= \sqcup_{g,h} {\rm Hom}(g,h)$.
        
        \item Multiplication as composition of 1-morphisms: $(xgx^{-1},y)\cdot (g,x) = (g,xy) \ , \ \forall g,x,y\in G$.
    \end{itemize}
For any morphism we use $s,t$ to denote the source and the target object. In the case, for any $x = (g,k)\in \CG$ we have $s(g,k) = g$ and $t(g,k) = kgk^{-1}$.

Geometrically, the inertia groupoid provides a rigorous way to understand the free loop space of the classifying space (or classifying stack) $BG$. The isomorphism classes of principal $G$-bundles on a circle $S^1$ are in direct bijection with the connected components of the free loop space $LBG$. 
It is natural to consider its representation category $\Rep(\gmg)$, where each representation is given by a functor $F:\gmg \to {\rm Vec}$. Explicitly, this would encode the data
\be
\begin{tikzcd}
F:\gmg \arrow[rr]             &                    & {\rm Vec}                    \\
g \arrow[rr, "F_g", maps to] \arrow[d, "{(g,h)}"] & & V_g \arrow[d, "{F_{(g,h)}}"] \\
hgh^{-1} \arrow[rr, "F_{hgh^{-1}}"]       &        & V_{hgh^{-1}}                
\end{tikzcd}
\ee
with the additional requirement that every morphism $F_{(g,h)}$ is an invertible linear map, $V_g \cong V_{hgh^{-1}}$. 

Thus the resulting vector space can be regrouped according to the conjugacy classes
\be  V =  \bigoplus_{g\in G} V_g  \cong \bigoplus_{[g]\in {\rm Cl}(G)} (V_g^{\oplus |[g]|} ) \ .  \ee
For an irreducible representation, we ought to set $V_g \cong \bbC$ for one conjugacy class $[g]$ and $V_h = \emptyset ,\ \forall h\notin [g]$. The invertible linear maps form the (induced representations of) irreducible representations of the centralizer $C_G(g)$. Such doublets labeled by $([g],\pi\in\mathrm{IrRep}(C_G(g)))\in \Rep(\gmg)$ exactly one-to-one correspond to the simple objects of the Drinfeld center $\CZ(\vecg)$.

With a closer look on morphisms, it has been verified that the two categories are also equivalent~\cite{Ostrik:2001xnt,Willerton:2008gyk,Ben-Zvi:2008vtm,Fantechi:2001kcr},
\be \Rep(\gmg) \cong \CZ(\vecg) \ .  \ee

As a final remark, for the symmetry of a $2$D physical system described by a fusion category $\mathcal{C}$, its Drinfeld center $\mathcal{Z}(\mathcal{C})$ is known to be equivalent to the representation category $\Rep(\mathrm{Tube}(\mathcal{C}))$ of the tube algebra $\mathrm{Tube}(\mathcal{C})$. The tube algebra $\mathrm{Tube}(\mathcal{C})$, introduced in \cite{ocneanu1994chirality}, is a finite-dimensional $C^*$-algebra, and its relation to the Drinfeld center $\mathcal{Z}(\mathcal{C})$ was established in \cite{evans1995ocneanu,Izumi:2000qa,Muger}; see also \cite{Lin:2022dhv} for a review. Physically, the tube algebra can be understood as the fusion algebra acting on both local operators and twist operators, where the latter are point-like operators attached to topological lines $L$ in $\mathcal{C}$.\footnote{Twist operators are also referred to as defect operators or disorder operators. Under the operator-state correspondence, they are identified with states in the $L$-twisted Hilbert space.} In the special case $\mathcal{C}=\mathrm{Vec}_G$, a twist operator is attached to a tail labeled by a group element $g$, and the action of $G$ on this tail is given by conjugation. Consequently, $\mathrm{Tube}(\mathrm{Vec}_G)$ is naturally identified with $G//_{\text{Ad}}G$ as a groupoid.

\subsection{Twisted Category}
For an anomalous group symmetry $\vecgo$ with nontrivial $\omega$, there are parallel constructions in both languages we introduced above. Analogous to the untwisted case, $\vecgo$ can also be formulated as the representation category of a certain quasi-Hopf $C^*$-algebra. Its Drinfeld center $\CZ(\vecgo)$ can be constructed from multiple equivalent ways~\cite{Schauenburg:1999cnm,Bhowmick:2018}:
\begin{enumerate}
    \item the representation category of the twisted Drinfeld double $D^\omega (G)$~\cite{Roche:1990hs,Drinfeld:1989st};
    \item the representation category of the twisted inertia groupoid~\cite{Moerdijk:2002hiu};
    \item the representation category of $C^*$-algebra associated to the Fell bundle~\cite{Roy:2015}.
\end{enumerate}
We will briefly introduce the first two well-known constructions, and elaborate on the last one, which can be directly generalized into the Lie group case.

The symmetry category $\vecgo$ can be formulated as the representation category of a twisted version of $C(G)$, which is a quasi-Hopf algebra denoted $C^\omega(G)$. As a vector space, this is exactly the same as $C(G)$, but other structures shall conform to the twist $\omega$\footnote{We choose a representative $\omega:G\times G\times G\rightarrow U(1)$ in the cohomology class $H^3_{\rm grp}(G,U(1))$.}. The product and coproduct are the same as $C(G)$. The twist demonstrates its power through the isomorphism $\Phi$
\be
\begin{tikzcd}
&  & (C^\omega(G)\otimes C^\omega(G))\otimes C^\omega(G) \arrow[dd, "\Phi"] \\
C^\omega(G)\otimes C^\omega(G) \arrow[rrd, "{\rm id}\otimes \Delta"'] \arrow[rru, "\Delta \otimes {\rm id}"] &  &    \\
&  & C^\omega(G)\otimes (C^\omega(G)\otimes C^\omega(G))                   
\end{tikzcd}
\ee
\be
\Phi(f,f',f'') = \sum_{g,h,k\in G} \omega(g,h,k)^{-1} f(g) \otimes f'(h) \otimes f''(k) \ .
\ee
Consistency condition would give twists to other structures. For more detailed construction, please refer to~\cite{Roche:1990hs}.

The Drinfeld center can be constructed in similar ways. The \calg is constructed by the conjugate action with a twist, denoted as $C(G)\rtimes^\omega_{\rm Ad} G$. The resulting Drinfeld double $D^\omega(G)$ admits a twist in its product and coproduct structure as well~\cite{Roche:1990hs}
\be\ba
 \left(e_g, x\right) \cdot\left(e_h, y\right) &=\left(\theta_g(x,y)\delta_{x g x^{-1}, h} e_g, x y\right) \\
 \Delta\left(e_g, x\right) &=\sum_{h k=g} \gamma_x(h,k)\left(e_h, x\right) \otimes\left(e_k, x\right)
\ea\ee
with the twist phases given by
\be\begin{aligned}
\theta_g(x, y) & :=\frac{\omega(g, x, y) \omega\left(x, y, (xy)g(xy)^{-1} \right)}{\omega\left(x, xgx^{-1}, y\right)} \\
\gamma_x(a, b) & :=\frac{\omega(a, b, x) \omega\left(x, xax^{-1},  xbx^{-1}\right)}{\omega\left(a, x,  xbx^{-1}\right)} \ .
\end{aligned}\ee
Other structures in $D^\omega (G)$ can also be specified, see~\cite{Gruen:2021afw}. The Drinfeld center $\CZ(\vecgo)$ is also well known to be equivalent to the representation category $\Rep(D^\omega(G))$.

\subsection{Fell Bundle}\label{sec:fellbundle}
A more $C^*$-algebraic formulation of Drinfeld center is given in~\cite{Bhowmick:2018}. The key insight of this work is that the tube algebra of $\vecg$\footnote{In their work they described generic $C^*$ tensor categories graded by $G$, we use only a very naive case of their result.} has a Fell bundle structure. This is also true for the twisted version $\vecgo$, we only have to perform a corresponding twist to the Fell bundle $\Sigma_\omega$. With this regard, the Drinfeld center can be computed from the representation category of the \calg associated to the Fell bundle,
\be\label{eq:centertofellbundle} \CZ(\vecgo) \cong \Rep( C^*(G//_{\rm Ad}G , \Sigma_\omega) )  \ . \ee
In this section we will give a concise introduction to the concept of Fell bundle, for a detailed definition, please refer to Appendix~\ref{app:fellbundle}.

For our purpose, we only consider Fell line bundles. A Fell line bundle over the groupoid $\CG = \gmg$, denoted $p:\CL \to \CG$, contains the following data:
\begin{itemize}
    \item For every $g\in \CG^{(0)}$, each fiber $p^{-1}(g)$ is a one dimensional Banach space (isomorphic to $\bbC$ as a vector space). It is equipped with the complex number multiplication, complex conjugation as an involution map and the complex number norm.
    \item For every morphism $(g,h)\in {\rm Hom}_\CG(g,hgh^{-1})$, the fiber $p^{-1}(g,h)$ gives a correspondence in ${\rm Corr}(\bbC,\bbC)\cong \{\text{all Hilbert spaces} \}$ (see Appendix~\ref{app:bimod} for definition), which we choose to be $\bbC$ in the case of a Fell line bundle.
\end{itemize}
Such a Fell line bundle $\CL$ can be constructed from a $U(1)$-central extension $\widetilde{\CG}$ over $\CG$. A $U(1)$-central extension is a groupoid $\widetilde{\CG}$ with the same set of objects as $\CG$, and the morphisms form a $U(1)$-principal bundle $\pi: \CG^{(1)}\to \CG^{(0)}$. See Definition~\ref{def:centextu1} for precise definition. In certain literatures such extension is also abbreviated as a sequence
\be~\label{eq:u1centext}  1 \rightarrow U(1) \rightarrow \widetilde{\CG} \rightarrow \CG \rightarrow 1 \ . \ee
The Fell line bundle $\CL$ associated with $\widetilde{\CG}$ can be constructed as 
\be\CL = (\widetilde{\CG}\times \bbC)/\{ \forall z\in U(1), (\gamma, c) \sim (\gamma z,z^{-1}c)  \} \ .\ee
In the following, we will use the the $U(1)$-central extension of groupoid and its associated Fell line bundle interchangably. For every Fell line bundle, there is an associated $C^*$-algebra of continuous, compactly supported sections denoted $\Gamma_c(\mathcal{G}, \mathcal{A})$, where the algebra multiplication is given by convolution, and involution given by $f^*(x) = \overline{f(x^{-1})}$.

It has been well-established that the extension~\eqref{eq:u1centext} is classified by the second groupoid cohomology $H^2(\gmg,U(1))\cong \bigoplus_{[g]}H^2_{\rm grp}(C_G(g),U(1))$\footnote{The groupoid cohomology of groupoid $\mfr{X}$ can be constructed as the $\check{\rm C}$ech cohomology of the simplicial space $X_\bullet$, where $X$ is an atlas of $\mfr{X}$ with a representable submersion $X\to \mfr{X}$. The second cohomology class $H^2(\gmg,U(1))$ characterizes $U(1)$-gerbe over $\CG$, which is different from $U(1)$-central extension over $\CG$. However, when $H^1(G,U(1)) = H^2(G,U(1))= 0$, $H^2(\CG,U(1))$ is isomorphic to the isomorphism class of $U(1)$-central extensions.}~\cite{Behrend:2008}, and we will explain how this classification manifest in the \calg associated to the line bundle.  Choose a representative $\nu$ of the class $[\nu]\in H^2(\CG,U(1))$, the corresponding \calg of the Fell bundle admits a twisted multiplication given by the following procedure~\cite{Kumjian:1998umx,Bhowmick:2018}:

\begin{enumerate}
    \item For each $g\in G$, we can take a section $\theta_g : \CG_g\to \Sigma_g$\footnote{The notation $\CG_g$ stands for the set of morphisms starting from the object $g$.}
    \item For each $g\in G$, the cocycle $\nu$ gives a map\footnote{The notation $\CG^g$ stands for the space of morphisms targeting to the object $g$.} 
    \be\ba
    &\nu_g: \CG_g\times \CG^g \to U(1) \\ 
    &\nu_g(x,y) = \theta_{s(y)}(xy)^{-1} \theta_g(x) \theta_{s(y)}(y) \ ,
    \ea\ee
    where the second line holds because the RHS lives in $U(1) \cong U(1) \times \{s(y) \}\subset \Sigma_{ s(y)}$. More explicitly, when restricted to $\CG_g^g$, the cocycle gives
    \be  \nu_g : \CG_g^g\times \CG_g^g \to U(1) \ . \ee
    Given that $\CG_g^g = C_G(g)$, we have
    \be  \nu_g : C_G(g)\times C_G(g) \to U(1)  \ . \ee
    \item The maps $\{\nu_g\}$ satisfy the cocycle identity\footnote{The notation $\CG^g_h := \CG_h\cap \CG^g$.}
    \be\label{eq:cocycle}
    \nu_g\left(x, y\right) \nu_h\left(x y, z\right)=\nu_g\left(x, y z\right) \nu_h\left(y, z\right) \ , \ \forall x \in G_g, y \in G_h^g, z \in G^h
    \ee
    \item The \calg of Fell bundle sections $C^*(\CG,\Sigma)$ has the multiplication given by the $\nu$-twisted convolution
    \be
    (f*g)(x) = \sum_{y\in \CG^{t(x)}} \nu_{s(x)}(y,y^{-1}x) f(y) g(y^{-1}x) 
    \ee
    as well as the involution
    \be
    f^*(x) = \overline{\nu_{s(x)}(x,x^{-1}) f(x^{-1})} \ .
    \ee
\end{enumerate}
The corresponding groupoid $2$-cocycle classifying the Fell bundle $\Sigma_\omega$ on $\gmg$ can be computed from a representative $\omega$ of $[\omega]\in H^3_{\rm grp}(G,U(1))$ by~\cite{Bhowmick:2018}
\be
\nu\left((g_1,s_1), (g_2,s_2)\right)=\omega\left(s_1 g_1 s_1^{-1}, s_1, s_2\right) \omega^{-1}\left(s_1, g_1, s_2\right) \omega\left(s_1, s_2, g_2\right) \ 
\ee
for every $g_{1,2},s_{1,2} \in G \ $ satisfying $ \ g_1 = s_2g_2s_2^{-1}$(i.e., $(g_1,s_1) $ and $ (g_2,s_2)$ are composable morphisms). At the level of $C^*$-algebra, the representation category of $C^*(\gmg,\Sigma_\omega)$ has the same objects as $\CZ(\vecgo)$. With suitable comultiplication and $R$-matrix, we are able to recover the Drinfeld double, and reach the entire modular tensor category structure of $\CZ(\vecgo)$.

\section{Non-Anomalous Lie Group Symmetry}\label{sec:non-anomalous}
A categorical description of Lie group symmetries requires tensor (higher-)categories with infinite simple objects. Various proposals have been presented, including skyscraper sheaf categories~\cite{Freed:2009qp,Jia:2025vrj} and quasi-coherent sheaf categories~\cite{Jia:2025vrj,Stockall:2025ngz}.

Here we present another formulation using the representation categories of infinite dimensional $C^*$-algebras, generalizing the results of finite groups constructed in the previous sections. In this section, we will assume the Lie group $G$ is locally compact Hausdorff (LCH) unless otherwise specified.

\subsection{Construction of Symmetry Category}
Consider the LCH Lie group $G$ with Haar measure $d\Gamma$, we intend to imitate the previous construction by taking the representation category of its function space. However, the $C^*$ property requires the functions to be well behaved under norms, and the topology of $G$ shall be reflected in certain notions of continuity. Thus, we define\footnote{We are using the same notation $C_0(G)$ as in~\cite{Marin-Salvador:2025stc}.}
\be
C_0(G) := \{ f: G\to \bbC | \text{continuous, vanishing at infinity}  \} \ .
\ee
$C_0(G)$ can also be endowed with a Hopf algebra structure~\cite{Etingof:2024abc}, rendering it a Hopf $C^*$-algebra. Specifically, the coproduct is given by
\be
\Delta: C_0(G) \to C_0(G)\otimes C_0(G)\cong C_0(G\times G)\ , \ \Delta(f)(g,h) = f(gh) \ ,
\ee
the counit and antipode are given by
\be\ba
&\varepsilon:C_0(G) \to \bbC \ , \ \varepsilon(f) = f(e)\\
&S:C_0(G) \to C_0(G) \ , \ S(f)(g) = f(g^{-1}) \ .
\ea\ee

With the above Hopf $C^*$-algebra construction, we are able to construct the symmetry category. We define the category ${\rm Hilb}(G)$ with the following data:
\begin{itemize}
    \item The objects are representations of $C_0(G)$. Each Hilbert space $\CH$ of the  representation $\pi : C_0(G) \to \mathscr{B}(\CH)$ necessarily takes the form of a direct integral (See Theorem 1.3.13 of~\cite{Renault:2009})
    \be
    \CH \cong \int_G^\oplus  \CH_g \  d\mu(g) \ ,
    \ee
    where $d\mu$ is a Radon measure\footnote{A Radon measure is a Borel measure that is always finite on compact subsets.} on $G$ (note that this is generically not the Haar measure) and $\CH_g$ is a measurable Hilbert bundle on $G$. Each vector in $\CH$ takes the form
    \be \xi = \int_G  \xi_g \ d\mu(g) \  , \ \xi_g \in \CH_g \ .\ee
    The action of $C_0(G)$ on $\CH$ is given by a pointwise\footnote{In some literatures, such property is denoted ``decomposable''.} multiplication
    \be \big(\pi(f)\cdot \xi\big)_g = f(g)\xi_g \ .  \ee
    \item Every simple object is labeled by a 1-dimensional Hilbert space $\bbC$ over the Dirac measure on a single point, denoted by $(\bbC, \delta_g)$.
    \item Morphisms between two representations are intertwiners.
    \item The monoidal structure is given by a convolution on both the Hilbert bundles and the measure,
        \be \ba
        m: G\times G\to G \ &, \ (g,h)\mapsto gh \\
        d\mu^*\nu &:= m_*(d\mu\times d\nu)\\
        (d\mu, \CH_g) \otimes (d\nu,\CK_g) &= (d\mu^*\nu, (\CH*\CK)_g ) \\
        (\CH*\CK)_g &= \int_G^{\oplus} \ \CH_{h}\otimes \CK_{h'} \ d\lambda_g(h,h')
        \ea \ee
        where $d\lambda_g$ is the conditional measure of product measure $d\mu\times d\nu$ supported on $m^{-1}(g)\subset G\times G$. For simple objects, this reads
        \be (\bbC,\delta_g) \otimes (\bbC,\delta_h) = (\bbC,\delta_{gh}) \ .  \ee
\end{itemize}
The ${\rm Hilb}(G)$ category matches our intuition for a naive generalization of $\vecg$, while providing a mathematically rigorous formulation to the continuous ``direct sum'' of simple objects with the direct integral.

\subsection{Calculating the Drinfeld Center}
In~\cite{hataishi2025categorical,Neshveyev2016}, the following proposition is proved:
\begin{prop}
    The Drinfeld center of unitary representations of a Hopf algebra $\bbG$ is the category of unitary representations of its quantum double. Explicitly,
    \be \CZ(\Rep(\bbG)) = \Rep(D\bbG)  \ . \ee
\end{prop}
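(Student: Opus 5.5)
The plan is to build an explicit functor $\Rep(D\bbG)\to\CZ(\Rep(\bbG))$ and show it is an equivalence. Throughout, $\bbG$ is a (locally compact, or at least regular) Hopf $C^*$-algebra with dual $\hat\bbG$. The quantum double $D\bbG$ is realized, as in the finite case $C(G)\rtimes_{\rm Ad}G$, as the $C^*$-algebra generated by $C_0(\bbG)$ and $C^*(\hat\bbG)$. The commutation relations between the two pieces are implemented by the multiplicative unitary $W$ (the analogue of $\delta_{xgx^{-1},h}$). The comultiplication on $D\bbG$ is $\Delta_D=(\mathrm{id}\otimes\mathrm{Ad}(\text{flip-twist})\otimes\mathrm{id})\circ(\Delta\otimes\hat\Delta^{\rm op})$.

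\textbf{From representations of $D\bbG$ to central objects.} Let $\pi$ be a nondegenerate unitary representation of $D\bbG$ on $\CH$. Restricting $\pi$ to $C_0(\bbG)$ gives an object $X$ of $\Rep(\bbG)$. Restricting it to $C^*(\hat\bbG)$ gives a unitary corepresentation, i.e.\ a unitary $U\in M(\CK(\CH)\otimes C_0(\bbG))$. For any other object $Y=(\CK,\pi_Y)$ we define the half-braiding
\be
c_{X,Y}:=\Sigma\circ(\mathrm{id}\otimes\pi_Y)(U):X\otimes Y\to Y\otimes X .
\ee
Three things must then be checked:
\begin{enumerate}
\item $c_{X,Y}$ intertwines the $\Delta$-twisted actions. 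This is exactly the Drinfeld commutation relation in $D\bbG$, namely the Yetter--Drinfeld compatibility between $\pi|_{C_0}$ and $U$.
\item $c$ is natural in $Y$. This is automatic, since $U$ lives in the first leg.
\item $c$ satisfies the hexagon $c_{X,Y\otimes Z}=(\mathrm{id}\otimes c_{X,Z})(c_{X,Y}\otimes\mathrm{id})$. This reduces to the corepresentation identity $(\mathrm{id}\otimes\Delta)U=U_{12}U_{13}$.
\end{enumerate}
Intertwiners of $D\bbG$-representations commute with $U$, hence are morphisms in the center. So we obtain a fully faithful functor. The tensor structure on both sides is built from $\Delta$, so this functor is monoidal, and it also respects the braiding.

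\textbf{Essential surjectivity.} Given $(X,c)\in\CZ(\Rep(\bbG))$, we must reconstruct $U$. We plan to use the regular representation $L^2(\bbG)$, on which $C_0(\bbG)$ acts by multiplication. Set $U:=\Sigma\circ c_{X,L^2(\bbG)}$, viewed as an operator on $\CH_X\otimes L^2(\bbG)$. Naturality of $c$ with respect to all intertwiners of $L^2(\bbG)$ forces $U$ to commute with the commutant of $C_0(\bbG)''$, so $U\in\mathscr B(\CH_X)\bar\otimes L^\infty(\bbG)$. Naturality against the maps $L^2(\bbG)\otimes L^2(\bbG)\to L^2(\bbG)$ built from $W$, combined with the hexagon, then yields $(\mathrm{id}\otimes\Delta)U=U_{12}U_{13}$. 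This says $U$ is a unitary corepresentation, i.e.\ a representation of $C^*(\hat\bbG)$. The compatibility of $c$ with the $C_0(\bbG)$-module structure gives the double relations, so $X$ becomes a $D\bbG$-representation whose image under the functor is $(X,c)$ again.

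\textbf{Main obstacle.} The hardest part will be the analytic step in the reconstruction. We must show that $U$ actually lies in the $C^*$-level multiplier algebra $M(\CK(\CH_X)\otimes C_0(\bbG))$, rather than only in the von Neumann tensor product. We also need the relevant half-braidings to be unitary and determined by the regular object. For finite groups this is trivial. In the continuous case the first thing to try is the regularity of $\bbG$, i.e.\ that $C_0(\bbG)$ is the norm closure of slices of $W$, together with the fact that every object is weakly contained in a multiple of $L^2(\bbG)$, in order to propagate the half-braiding from the regular object to all objects. This is also where $G$ has to be assumed amenable or regular in later applications.
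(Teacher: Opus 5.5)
The paper gives no argument of its own for this proposition; it cites the result from the literature. In a rigid semisimple setting, such as Neshveyev--Yamashita's, every object is a sum of simples, so the analytic issue below never arises there.

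Your direct Yetter--Drinfeld-style reconstruction is a sensible route, and the first half is essentially right, modulo two slips.
\begin{itemize}
\item With $(\mathrm{id}\otimes\Delta)U=U_{12}U_{13}$ and $c_{X,Y}=\Sigma(\mathrm{id}\otimes\pi_Y)(U)$, your hexagon evaluated on $x\otimes y\otimes z$ with $y\in Y_h$, $z\in Z_k$ demands $U_{hk}=U_kU_h$. You need $U^*$ in place of $U$, which also makes $c_{X,Y}$ preserve the grading.
\item Fullness needs the converse direction: a morphism in the center must intertwine $U$. This follows by testing against the faithful object $L^2(\bbG)$.
\end{itemize}

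The genuine gap is in essential surjectivity, exactly at the step you flag, and the tool you propose cannot close it.

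\textbf{Why weak containment fails.} Naturality of $c_{X,-}$ only transports information along intertwiners, and weak containment produces none. For non-discrete $G$ one has $\mathrm{Hom}(\delta_g,L^2(G))=0$. So your $U\in\mathcal B(H_X)\bar\otimes L^\infty(\bbG)$, defined only almost everywhere, says nothing about $c_{X,\delta_g}$; you cannot even form $(\mathrm{id}\otimes\pi_{\delta_g})(U)$.

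A concrete case: take $X=\mathbf 1$. Naturality plus the hexagon among point masses only force $c_{\mathbf 1,\delta_g}=\chi(g)$ to be an abstract, possibly non-measurable, homomorphism $\chi:G\to U(1)$. What rules this out is the hexagon applied to $\delta_g\otimes L^2(G)\cong L^2(G)$. Writing $c_{\mathbf 1,L^2}=\phi\in L^\infty(G)$, it gives $\phi(y)=\chi(g)\phi(g^{-1}y)$ for almost every $y$, which forces $\chi$ to be continuous.

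\textbf{The missing idea is absorption.} For every object $Y$ one has $Y\otimes L^2(\bbG)\cong H_Y\otimes L^2(\bbG)$, implemented by $(\pi_Y\otimes\mathrm{id})(W)$. Combine this with three facts:
\begin{itemize}
\item naturality along the isometries $L^2(\bbG)\to H_Y\otimes L^2(\bbG)$;
\item the hexagon $c_{X,Y\otimes L^2}=(\mathrm{id}_Y\otimes c_{X,L^2})(c_{X,Y}\otimes\mathrm{id})$;
\item invertibility of half-braidings.
\end{itemize}
Together these show that $c_{X,Y}$ is determined by $c_{X,L^2}$ for every $Y$, including the singular ones.

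\textbf{You still need automatic continuity.} An a.e.-defined measurable unitary corepresentation must agree with one lying in $M(\mathcal K(H_X)\otimes C_0(\bbG))$. Classically, this is the statement that an a.e.-multiplicative measurable unitary map agrees a.e.\ with a continuous homomorphism. The half-braiding built from the continuous $U^*$ then agrees with $c$ on $L^2(\bbG)$. By the uniqueness just established, it agrees with $c$ on every object.
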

In our case, we take $\bbG = C_0(G)$. The quantum double of $C_0(G)$ is a $C^*$-algebra~\cite{Koornwinder:1996uq}
\be
DC_0(G) \cong C_0(G\times G)\ , 
\ee 
with the $*$-algebra structure given by\footnote{We assume unimodularity for group $G$. If unimodularity is not satisfied, the involution shall be modified into $F^*(g,h) = \Delta(h^{-1})\overline{F(h^{-1}gh,h^{-1})}$, where $\Delta:G\to \bbR$ is the modular function (instead of the comultiplication).}
\be\ba
F^*(g,h) &= \overline{F(h^{-1}gh,h^{-1})}\\
(F_1\bullet F_2)(g,h) &= \int_G \  F_1(g,k) F_2(k^{-1}gk , k^{-1}h) \ d\Gamma(k) \ ,
\ea\ee
where $\bullet$ is the algebra multiplication and $d\Gamma$ stands for the Haar measure as before. 
Moreover, Corollary 3.10 of~\cite{Koornwinder:1996uq} reveals that 
the simple objects (irreducible representations) of $\Rep(DC_0(G))$ are classified by $([g],\rho)$, where $[g]$ is a conjugacy class and $\rho$ is an irreducible representation of $C_G(g)$ the centralizer, which aligns with the case of finite groups.

We can make an equivalent formulation with the Fell bundle construction. By taking the Fell bundle $\Sigma$ to be a trivial bundle over the inertia groupoid $\gmg$, the associated \calg is exactly isomorphic to the crossed product (Section 8.1 of~\cite{Kaliszewski:2008})
\be  C^*( \gmg ,\Sigma )  \cong  C_0(G)\rtimes_{\rm Ad} G \ . \ee
This is also exactly the underlying $C^*$-algebra of the Drinfeld double constructed above (See Example 5.13 of~\cite{Roy:2015}).

The Fell bundle over the inertia groupoid provides a smooth pathway to depict the equivariant formulations on Lie groups. We will see how the anomaly information twists the Fell bundle, and how it generates a model for Drinfeld center for the anomalous Lie group symmetry categories.

\section{Anomalous Lie Group Symmetry}\label{sec:anomalous}
We consider the anomalous Lie group symmetry categories specifically in $2$D, where the anomaly is classified by $H^4(BG,\bbZ)$. 
We will first show how the anomaly classification can be transgressed into different cohomology classes, which makes them suitable for different geometric and algebraic models from which we build the anomalous symmetry category and its Drinfeld center. Then we will construct the anomalous symmetry categories from \calg representation, and compute the simple objects of the Drinfeld center through Fell bundle construction.

\subsection{Transgression of Cohomology Classes}

There are various ways to equivalently describe a class $k\in H^4(BG,\bbZ)$. We will use the key mathematical concept of transgression, which was initially developed by Serre, Cartan, Borel et al. to transform the characteristic classes over $BG$ through the fibration of universal principal bundle
\be  G \to EG \to BG \ee
onto cohomology classes on $G$~\cite{CartanSeminar:1948,PierreSerre:1951}. Then Dijkgraaf and Witten connected the WZW term with Chern-Simons theory through transgression
\be\ba \label{eq:ordinarytransgression} \tau: H^4(BG,\bbZ) &\to H^3(G,\bbZ) \\ \text{CS theory} &\to   \text{WZW model} \ea\ee
in their famous work~\cite{Dijkgraaf:1990}. However, our approach to transgression is a bit different. For our purposes, we need certain transgression maps to establish a connection among the ordinary cohomologies $H^4(BG,\bbZ)$ and $H^3(G,\bbZ)$, the equivariant cohomology $H^3_G(G,\bbZ)$ and the groupoid cohomology $H^2(\gmg,U(1))$. Establishing this link enables us to translate the anomaly information into the Fell bundle construction, which forges a handy tool towards the Drinfeld center.

We will briefly introduce the notion of bundle gerbes here, a more rigorous and detailed definition can be found in Appendix~\ref{app:multbdgb}. Consider a manifold $M$, we take a surjective submersion $\varphi:Y\to M$, which can be imagined as a refined cover. Then we take a $U(1)$-bundle $L\to Y\times_M Y$ over the fiber product of $Y$, as well as an isomorphism of $U(1)$-bundles on the triple fiber product $Y\times_M Y\times_MY$, given by
\be  \rho: L_{12} \otimes L_{23} \to  L_{13}\,, \ee
where $L_{i j}$ is the pullback of $L$ to the $(i,j)$-th factor of $Y^{[3]}$. There is also a coherence relation of $\rho$ on the quadraple fiber product.

When we consider a bundle gerbe over a group $G$, the multiplication structure $m:G\times G \to G$ becomes important. Bundle gerbes incorporated with the multiplication is called \emph{multiplicative bundle gerbe}. Explicitly, suppose $\mathscr{G}$ is a bundle gerbe over $M = G$, the group multiplication gives an equivalence of bundle gerbes over $G\times G$,
\be \mathscr{M}: p_1^* \mathscr{G} \otimes  p_2^* \mathscr{G}  \to m^*\mathscr{G}\,, \ee
This equivalence invites an isomorphism $\alpha$ over $G\times G\times G$ due to the associativity of group multiplication, and $\alpha$ shall give a coherent identity over $G^{\times 4}$. We will denote a multiplicative bundle gerbe by its data $(\mathscr{G},\mathscr{M},\alpha)$. The associativity given by $\alpha$ is exactly analogous to the cohomology class $[\omega]\in H^3_{\rm grp}(G,U(1))$ classifying finite group anomaly in $2$D.

Murray proved that the equivalence class of bundle gerbes over $G$ is classified by an integral class $[\kappa]\in H^3(G,\bbZ)$~\cite{Murray:1996bdg,Murray:2000sit}, also know as the \emph{Dixmier-Douady} class.  For multiplicative bundle gerbes, a crucial input is provided by Proposition 5.2 of~\cite{Carey:2004xt} stating that:
\begin{prop}[Classification of multiplicative bundle gerbe~\cite{Carey:2004xt}]
    Let $G$ be a compact, connected Lie group. Then there is an isomorphism between $H^4(B G , \mathbb{Z})$ and the space of isomorphism classes of multiplicative bundle gerbes $(\mathscr{G},\mathscr{M},\alpha)$ on $G$.
\end{prop}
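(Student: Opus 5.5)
Our plan is to build the classification from simplicial and equivariant cohomology, comparing both sides through a common cohomology theory and not by constructing gerbes directly. Multiplicative structure on a bundle gerbe over $G$ is simplicial data on the nerve $BG_\bullet$: the gerbe $\mathscr{G}$ lives on $G = BG_1$, the isomorphism $\mathscr{M}$ on $BG_2 = G\times G$, the 2-morphism $\alpha$ on $BG_3$, and the coherence on $BG_4$. So we first show that isomorphism classes of triples $(\mathscr{G},\mathscr{M},\alpha)$ are in bijection with the Deligne/\v{C}ech hypercohomology group $H^3(BG_\bullet,\underline{U(1)})$ of the simplicial manifold $BG_\bullet$ with coefficients in the sheaf of smooth $U(1)$-valued functions. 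We would do this by a total-complex argument. Choose good covers of each $G^{\times p}$ compatible with the face maps, and represent a bundle gerbe on $G^{\times p}$ by a \v{C}ech 2-cocycle in $\underline{U(1)}$. Then $\mathscr{M}$ contributes a 1-cochain on $G^{\times 2}$, $\alpha$ contributes a 0-cochain on $G^{\times 3}$, and the coherence on $G^{\times 4}$ is exactly the cocycle condition in the double complex with simplicial differential $\sum (-1)^i d_i^*$. Isomorphisms of multiplicative gerbes correspond to coboundaries, by the same bookkeeping one degree lower.

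Next we use the exponential sequence $0\to\bbZ\to\underline{\bbR}\to\underline{U(1)}\to 0$ of sheaves on $BG_\bullet$. The sheaf $\underline{\bbR}$ is fine on each $G^{\times p}$, so its hypercohomology reduces to the cohomology of the simplicial complex of smooth real cochains, $H^n_{\rm smooth}(G,\bbR)$ in the sense of van Est. For compact $G$ this vanishes in positive degree, by averaging against the Haar measure. The long exact sequence therefore gives $H^3(BG_\bullet,\underline{U(1)})\cong H^4(BG_\bullet,\bbZ)$. Finally, the cohomology of the constant sheaf $\bbZ$ on the simplicial manifold $BG_\bullet$ is the singular cohomology of its geometric realization, which is a model of $BG$. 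This gives $H^4(BG_\bullet,\bbZ)\cong H^4(BG,\bbZ)$, and composing yields the claimed isomorphism.

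The main obstacle is the first step: checking that the isomorphism classes of multiplicative gerbes, including the right notion of 1- and 2-morphisms of bundle gerbes (stable isomorphisms), match the hypercohomology exactly and not just up to a surjection. Two points need care here. First, one has to verify that refining covers on different simplicial levels does not change the answer; we would handle this by passing to the colimit over simplicial covers. Second, one has to check that the trivializing data on $BG_0=\mathrm{pt}$ impose no extra constraint. Connectedness of $G$ is used to normalize $\mathscr{M}$ near the identity and to avoid contributions from $\pi_0$. As a consistency check, we would confirm that the composite sends $(\mathscr{G},\mathscr{M},\alpha)$ to a class whose restriction along $G\to\Sigma G\to BG$ is the Dixmier–Douady class of $\mathscr{G}$, recovering the transgression $\tau$ of \eqref{eq:ordinarytransgression}.
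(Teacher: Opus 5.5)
The paper gives no proof of its own here: it imports the result from Carey et al.\ (and in Section~\ref{sec:anomalous} quotes $H^4(BG,\mathbb{Z})\cong H^3_{\rm SM}(BG,U(1))\cong H^3(BG,\underline{U(1)})$ from Brylinski and Wagemann--Wockel), and your route is the standard Brylinski argument behind that citation, so it is correct. Your route is to identify multiplicative gerbes with the simplicial Deligne group $H^3(BG_\bullet,\underline{U(1)})$, to kill $H^n(BG_\bullet,\underline{\mathbb{R}})$ for $n=3,4$ by Haar-averaging smooth cochains, and then to pass to $H^4(\|BG_\bullet\|,\mathbb{Z})=H^4(BG,\mathbb{Z})$. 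One small correction: nothing in your argument actually uses connectedness, since compactness suffices (the finite-group case $H^3_{\rm grp}(G,U(1))\cong H^4(BG,\mathbb{Z})$ already shows this), so your remark about normalizing $\mathscr{M}$ near the identity is superfluous rather than a needed step.
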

In other words, multiplicative bundle gerbes are just bundle gerbes whose Dixmier-Douady classes lie in the image of $\tau$ in \eqref{eq:ordinarytransgression}. Thereafter, in our previous work~\cite{Jia:2025vrj}, we gave a geometric construction on how the multiplicative bundle gerbe could result in a twist in the categories of skyscraper sheaves on Lie groups.

Bundle gerbes are crucial in the geometric construction, and the geometric intuition of quantum double would lead to $G$-action by conjugation, rendering the bundle gerbe $G$-equivariant. For such bundle gerbes, the classification become the equivariant cohomology
\be  H^3_G(G,\bbZ) =  H^3(EG\times_G G,\bbZ) \ . \ee
The relation with $H^3(G,\bbZ)$ is given by Proposition 3.1 of~\cite{Carey:2008}, which states that they are isomorphic,
\be  H^3_G(G,\bbZ)  \cong H^3(G,\bbZ)  \ee
for any connected, compact, simply-connected simple Lie group $G$. The equivariant cohomology again equals the groupoid cohomology~\cite{BarbosaTorres:2021}\footnote{Their work stated such equation based on the cohomology theory of differential stacks, not on groupoids. However, the stack cohomology and groupoid cohomology are isomorphic when we consider the groupoid as an atlas of the stack. See Definition 3.2, Theorem 4.3 and Theorem 4.7 of~\cite{BarbosaTorres:2021}.}
\be  H^3_G(G,\bbZ) =  H^3(G//_{\rm Ad}G,\bbZ)  \ , \ee
which is also isomorphic to the classification of groupoid extension~\cite{Tu:2003ph}
\be H^3(G//_{\rm Ad}G,\bbZ) \cong H^2(G//_{\rm Ad}G,U(1)) \ ,  \ee
which is exactly the classification of Fell line bundles over $\gmg$ discussed in Section~\ref{sec:fellbundle}. 
To summarize, the datum of $2$D anomaly $k\in H^4(BG,\bbZ)$ can be equivalently understood as the construction of a Fell bundle $\Sigma_k$ over the inertia groupoid $\gmg$.

More generically, we do not require the transgression map $\tau:H^4(BG,\bbZ)\to H^3_G(G,\bbZ)$ to be an isomorphism, but rather an injective homomorphism. We have proven in Appendix~\ref{app:injectivetransgression} that the transgression is indeed an injective homomorphism for compact connected Lie groups. Furthermore, we can also discuss non-compact Lie groups. Consider a non-compact reductive algebraic Lie group $G$ (over $\bbR$ or $\bbC$) and its maximal compact subgroup $K$, then the inclusion map $i:K\hookrightarrow G$ is a strong deformation retraction and induces the homotopy equivalences (see Theorem 3.1 of~\cite{Adem:2013pjx})
\be BG\simeq BK \ , \ EG\simeq EK \ee
and also $ EG\times _G G \simeq EK\times_K K $. Thus the cohomology classes are isomorphic
\be H^4(BG,\bbZ) \cong H^4(BK,\bbZ) \ , \ H^3_G(G,\bbZ) \cong H^3_K(K,\bbZ) \ .  \ee
Therefore the transgression map is also valid for non-compact connected reductive algebraic Lie groups (over $\bbR$ or $\bbC$).

\subsection{Construction of Symmetry Category}
For a given anomaly class $k\in H^4(BG,\bbZ)$, there is a corresponding multiplicative bundle gerbe $(\mathscr{G},\mathscr{M},\alpha)$, we can construct the twisted symmetry category ${\rm Hilb}^k(G)$ following the continuous tensor category procedure (See Example 3.12 in~\cite{Marin-Salvador:2025stc}). Specifically, we should first construct a twisted version of $C_0(G)$, consider its representation category then incorporate the twist information in the monoidal structure. In~\cite{Marin-Salvador:2025stc}, they constructed the category ${\rm Hilb}^\omega(G)$ with the twist $\omega\in H^3_{\rm SM}(BG,U(1))$ in the Segal-Mitchison cohomology. We will show that this is equivalent to $H^4(BG,\bbZ)$, and thus equivalent to the multiplicative bundle gerbes.

For the anomaly classes in $H^4(BG,\bbZ)$, we cannot naively claim that the Bockstein homomorphism
\be
\beta : H^3(BG,U(1)) \to H^4(BG,\bbZ)
\ee
is an isomorphism, because $H^\bullet(BG,\bbR)$ are not always trivial (for finite groups they are). However, this can be true for differentiable group cohomology. In~\cite{Brylinski:2000dcg,Wagemann:2015ctg}, and also explicitly summarized in~\cite{Liu:2026atf}, we have the following results:
\be H^4(BG,\bbZ) \cong H^3_{\rm SM}(BG,U(1)) \cong H^3(BG,\underline{U(1)}) \ .  \ee

The construction of twisted symmetry category ${\rm Hilb}^k(G)$ proposed as Example 3.12 of~\cite{Marin-Salvador:2025stc} is indeed valid, but we will adopt an equivalent formulation with multiplicative bundle gerbes to emphasize the geometric perspective (we give definition and classification of multiplicative bundle gerbe in Appendix~\ref{app:multbdgb}). Briefly speaking, we take the following steps:
\begin{enumerate}
    \item The anomaly $k\in H^4(BG,\bbZ)$ gives a multiplicative bundle gerbe $(\mathscr{G},\mathscr{M},\alpha)$.
    \item The bundle gerbe $\mathscr{G}$ includes the information of a line bundle $L\to Y^{[2]}$ over the surjective submersion $\pi : Y\to G$. The function $\rho: Y^{[3]} \to U(1)$  is the $\check{\rm C}$ech 2-cocycle that dictates the trivialization of the gerbe multiplication. The $\rho$ function gives a twisted convolution on the vector space $C_c(Y^{[2]})$
    \be  (f_1*f_2)(x,z) = \int_{(x,y,z)\in Y^{[3]}} f_1(x,y) f_2(y,z) \overline{\rho(x,y,z)} \ dy \ .  \ee
    The involution is given by
    \be f(x,y )=  \overline{f(y,x)} \ .  \ee
    This vector space can be completed to be a $C^*$-algebra $C_0(G,\rho)$\footnote{The nomenclature for $C_0(G,\rho)$ is intuitive but misleading. We gave it such name because we build the twisted symmetry category from its representation, but it is drastically different from $C_0(G)$ or $C^\omega(G)$. The \calg $C_0(G,\rho)$ is the normed function space of points on the fiber product $Y\times_G Y$. In the $\rho = 0$ scenario, it is still larger than $C_0(G)$. However, $C_0(G,\rho = 0) \cong C_0(G)\otimes \mathscr{B}(\CH)$ and $C_0(G)$ are Morita equivalent (see II.7.6.10 of~\cite{Blackadar:2006}) in the sense of operator algebras, which means they share equivalent representations and correspondences.}.
    \item The set of objects of $\hilb^k(G)$ is the set of representations $\Rep(C_0(G,\rho))$. The simple objects (irreducible representations) are exactly labeled by single points on the group manifold, which we denote as $\delta_g$.
    \item The monoidal structure should be induced by a map
    \be  \Rep(C_0(G,\rho))\otimes \Rep(C_0(G,\rho)) \cong  \Rep(C_0(G,\rho)\otimes C_0(G,\rho)) \to \Rep(C_0(G,\rho)) \ ,  \ee
    which is implemented by a $C_0(G,\rho)-C_0(G,\rho)\otimes C_0(G,\rho)$ correspondence (for definitions on correspondence of $C^*$-algebras, please refer to Appendix~\ref{app:bimod}). This correspondence basically encodes the information of $\mathscr{M}:p_1^*\mathscr{G}\otimes p^*_2\mathscr{G}\to m^*\mathscr{G}$. This information trickles down the representation category through the following procedure. Notice that there is a pullback hidden in $C_0(G,\rho)\otimes C_0(G,\rho)$
    \be  \ba
    &C_0(G,\rho) \otimes C_0(G,\rho) \cong C_0(G\times G,(p_1+p_2)^*\rho) \\
    &Y^{[3]}\times Y^{[3]} \stackrel{\rho \times \rho }{\longrightarrow} U(1) \times U(1) \stackrel{m}{\longrightarrow} U(1) 
    \ea \ee
    which transforms $\rho$ on $\mathscr{G}$ into $p_1^*\rho+p_2^*\rho$ on $p_1^*\mathscr{G}\otimes p^*_2\mathscr{G}$. The equivalence of bundle gerbes $\mathscr{M}$ dictates that the two pullbacks differs only by a coboundary
    \be p_1^*\rho + p_2^*\rho = m^*\rho + d \upsilon  \ . \ee
    This provides an invertible $C^*$-correspondence \be\mathcal{T}_{\mathscr{M}} \in {\rm Corr}\Big(C_0(G\times G,m^*\rho),C_0(G\times G,p_1^*\rho + p_2^*\rho)\Big) \ . \ee
    By taking relative tensor (Rieffel tensor) with 
    \be C_0(G\times G,m^*\rho)\in {\rm Corr}\big( C_0(G,\rho) ,  C_0(G\times G,m^*\rho) \big) \ , \ee
    we get the correspondence we want
    \be
    \mathcal{C}_{\mathscr{M}} := C_0(G\times G,m^*\rho) \otimes_{C_0(G\times G,m^*\rho)} \CT_{\mathscr{M}} \in {\rm Corr}\Big(C_0(G,\rho) , C_0(G,\rho)\otimes C_0(G,\rho)\Big) \ .
    \ee
    For any $R_1, R_2 \in \Rep(C_0(G,\rho))$, the monoidal structure of the category gives
    \be \ba \otimes_{\hilb^k(G)} : R_1 \otimes R_2  \mapsto   \CC_{\mathscr{M}} \otimes_{C_0(G,\rho) \otimes C_0(G,\rho)} (R_1 \otimes R_2) \in &{\rm Corr}(C_0(G,\rho), \bbC) \\ &= \Rep(C_0(G,\rho))  \ . \ea\ee
    \item Associativity of tensor product is given by $\alpha$ in the multiplicative bundle gerbe
    \be
    \begin{tikzcd}
\mathscr{G}_1\otimes \mathscr{G}_2\otimes \mathscr{G}_3 \arrow[d, "{{\rm id }\otimes \mathscr{M}_{2,3}}"'] \arrow[rr, "{\mathscr{M}_{1,2}\otimes {\rm id}}"] &  & \mathscr{G}_{12}\otimes \mathscr{G}_3 \arrow[d, "{\mathscr{M}_{12,3}}"] \arrow[lld, "{\alpha_{1,2,3}}", Rightarrow] \\
\mathscr{G}_{1}\otimes \mathscr{G}_{23} \arrow[rr, "{\mathscr{M}_{1,23}}"']                                                                          &  & \mathscr{G}_{123}                                                                                                      
\end{tikzcd}\ee
specifying the associativity of $\mathscr{M}$. For the \calg scenario, it translates to be a unitary intertwiner (for simplicity we denote $A = C_0(G,\rho)$)
\be\ba
\beta : \CC_\mathscr{M}\otimes_{A\otimes A}\Big\{ &\Big[ \CC_\mathscr{M}\otimes_{A\otimes A}( R_1 \otimes R_2) \Big] \otimes R_3 \Big\} \\ &\to \CC_\mathscr{M}\otimes_{A\otimes A} \Big\{R_1\otimes\Big[\CC_\mathscr{M}\otimes_{A\otimes A}(R_2\otimes R_3)\Big]\Big\} \ .
\ea\ee
\end{enumerate}

A simple object $\delta_g\in {\rm Obj}(\hilb^k(G))$ is an irreducible representation $\pi_g : C_0(G,\rho) \to \mathscr{B}( L^2(Y_g) )$\footnote{We will refer to the simple object as $\delta_g$ when we discuss categorical data, $\pi_g$ when we discuss the representation of $C_0(G,\rho)$.}. For the fiber on $g\in G$, the Čech cocycle data $\rho$ can be represented by a function $b_g: Y^{[2]}\to U(1)$ s.t.
\be b_g(x,y) b_g(y,z) = \overline{\rho(x,y,z)}  b_g(x,z) \ . \ee
The representation $\pi_g$ is given by
\be \big(\pi_g(f)\psi\big)(x) = \int_{Y_g} f(x,y) b_g(x,y) \psi(y) \, dy  \ , \ x\in Y_g \ . \ee
Then when performing composition we can see $b_g$ reproduces $\overline{\rho}$,
\be
\big(\pi_g(f_1*f_2)\psi\big)(x) = \Big(\pi_g(f_1)\circ \big(\pi_g(f_2) \psi\big) \Big)(x) \ .
\ee
The monoidal structure is induced by tensoring with $\CC_\mathscr{M}$. The data $\mathscr{M}$ gives the equivalence between bundle gerbes $p_1^*\mathscr{G}\otimes p^*_2\mathscr{G}$ and $m^*\mathscr{G}$ over $G\times G$. On fibers, this gives $Y_g\times Y_h\to Y_{gh}$, and one can verify
\be  \pi_g \otimes_{\hilb^k(G)}\pi_h \cong \pi_{gh} \ . \ee
Then the associator
\be \beta_{g,h,k}: \big(\pi_g \otimes_{\hilb^k(G)} \pi_h\big) \otimes_{\hilb^k(G)} \pi_k \stackrel{\sim}{\longrightarrow} \pi_g \otimes_{\hilb^k(G)} \big(\pi_h \otimes_{\hilb^k(G)} \pi_k \big) \ee
must be a $U(1)$ factor due to Schur's lemma.

\subsection{Intuition of Drinfeld Center}\label{sec:intuitioncenter}
In order to develop an intuition for the Drinfeld center, we observe that $H^4(BG,\mathbb Z)$ classifies a principal $B^2U(1)$-2-bundle over $BG$, namely, we have:
\begin{equation}\label{eq:Point_Level_Phys_Proof}
    k: BG \rightarrow B^3U(1)
\end{equation}
as a map from $BG$ to $B^3U(1)$. Apply a looping to $k$, we have (see Lemma 17.13 and 17.14 of~\cite{Behrend:2011ui}, and~\cite{Bunke:2006ipq}):
\begin{equation}\label{eq:Looping_Phys_Proof}
    L(k) : LBG \cong [G/G] \rightarrow LB^3U(1) \cong B^3U(1)\times B^2U(1),
\end{equation}
where $[G/G]$ is the quotient stack with $G$ acting on itself by conjugation.
The transgression is then constructed by projecting to the second factor as~\cite{fiorenza2015higher}:
\begin{equation}\label{eq:Loop_Level_Phys_Proof}
    \tau(k) = p_2\circ L(k): [G/G] \rightarrow B^2U(1)\,.
\end{equation}
Thus, $\tau(k)$ characterizes a principal $BU(1)$-bundle over $[G/G]$, or equivalently, an $S^1$-bundle-gerbe over $[G/G]$~\cite{Nikolaus:2011ag}. This then leads to a $U(1)$-central extension~\cite{behrend2003equivariant, Behrend:2008}:
\begin{equation}\label{eq:Central_Extension_of_G//G}
    1 \rightarrow U(1) \xrightarrow{\tau(k)} \Sigma_k \rightarrow G//_{\rm Ad} G \rightarrow 1
\end{equation}
which is exactly the sought-after loop-level data. Furthermore, since $H^4(BG, \mathbb{Z})$ can alternatively be viewed as classifying multiplicative bundle gerbes over $G$, which in turn is equivalent to $BU(1)$-extensions of $G$~\cite{waldorf2012construction}, the above procedure effectively constructs a path from\footnote{In many literatures, such extension by $BU(1)$ is denoted a string 2-group.}:
\begin{equation}\label{eq:Central_Extension_of_G}
    1 \rightarrow BU(1) \xrightarrow{k} \mathcal G_k \rightarrow G \rightarrow 1
\end{equation}
to~\eqref{eq:Central_Extension_of_G//G}. It is then clear that intuition leading to the looping operation in obtaining~\eqref{eq:Looping_Phys_Proof} can alternatively be justified by the construction of Hochschild homology from factorization homology~\cite{Ben-Zvi:2008vtm, ben2012morita, haine2019nonabelian} (see in particular Corollary 4.6 of~\cite{ayala2015factorization} applied to $B = S^1$ and $X = BG$):
\begin{equation}
    \int_{S^1} G \cong \int_{S^1} \Omega BG \cong LBG\,.
\end{equation}
The reader can further consult~\cite{Amabel:2019yrn, Costello:2021jvx} for basics of factorization homology.

In the framework of the Cobordism Hypothesis, one can view~\eqref{eq:Point_Level_Phys_Proof} as the point-level data that the corresponding TQFT assigns to a point, and~\eqref{eq:Loop_Level_Phys_Proof} as the loop-level data the same TQFT assigns to a loop. The latter is obtained from the former naturally either by looping in going from~\eqref{eq:Point_Level_Phys_Proof} to~\eqref{eq:Loop_Level_Phys_Proof}, or alternatively by $\int_{S^1}$-operation in factorization homology in going from~\eqref{eq:Central_Extension_of_G} to~\eqref{eq:Central_Extension_of_G//G}. Furthermore, in the spirit of~\cite{Landsman:1999, Hawkins:2008, landsman2012mathematical}, one can quantize the theory by linearizing $\Sigma_k$ over $G//_{\rm Ad}G$ to be $C^*(G//_{\rm Ad}G, \Sigma_k)$, whose quantum sectors are described by $\Rep(C^*(G//_{\rm Ad}G, \Sigma_k))$. Thus, by the Cobordism Hypothesis and the standard quantization procedure, one should intuitively expect that $\Rep(C^*(G//_{\rm Ad}G, \Sigma_k))$ is the Drinfeld center of $\Rep^k(C_0(G))$. This constitutes an intuition behind the formalism. 

\subsection{Calculating the Drinfeld Center}
As explained in Section~\ref{sec:intuitioncenter}, our calculation of Drinfeld center $\CZ(\hilb^k(G))$ is transformed to the calculation of the representation category of the \calg associated to the Fell bundle $C^*(\gmg,\Sigma_k)$, which captures the twisted equivariant structure.

Fortunately, the representations of $C^*(\gmg,\Sigma_k)$ has been described in Theorem 2.1 of~\cite{Ionescu:2013},
\begin{theorem}[Irreducible representations of Fell bundle]
    Suppose that $p: \mathscr{B} \rightarrow \CG$ is a separable Fell bundle over a locally compact groupoid with a Haar system. Let $u \in \CG^{(0)}$ and let $\CG_u^u:={\rm Hom}_{\CG}(u,u) $ be the stability group at $u$. Suppose that $L$ is an irreducible representation of $C^*(\CG_u^u, \mathscr{B})$. Then $\operatorname{Ind}_{\CG(u)}^G L$ is an irreducible representation of $C^*(\CG, \mathscr{B})$.
\end{theorem}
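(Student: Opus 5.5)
Since this is Theorem 2.1 of \cite{Ionescu:2013}, I would reproduce the standard argument rather than derive it from the paper's earlier material. The plan is to realize $\operatorname{Ind}_{\CG(u)}^{\CG}L$ concretely, via the Fell bundle restricted to the range fibre $\CG_u$, and then prove irreducibility by computing its commutant.

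\textbf{Step 1: build the equivalence bimodule.} The space $\Gamma_c(\CG_u,\mathscr{B})$ of compactly supported sections over $\CG_u=s^{-1}(u)$ is a right pre-Hilbert module over $C^*(\CG_u^u,\mathscr{B})$. Its inner product is the convolution $\langle \xi,\eta\rangle(h)=\int_{\CG_u}\xi(x)^*\eta(xh)\,d\lambda_u(x)$, and $C^*(\CG,\mathscr{B})$ acts on it on the left by convolution. The induced representation is the Rieffel tensor $\Gamma_c(\CG_u,\mathscr{B})\otimes_{C^*(\CG_u^u,\mathscr{B})}\CH_L$, in the language of correspondences from Appendix~\ref{app:bimod}.

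\textbf{Step 2: identify the imprimitivity structure.} The orbit $[u]=r(\CG_u)$ is homeomorphic to $\CG_u/\CG_u^u$, assuming the orbit is locally closed; this holds in our application, where conjugacy classes of $G$ are closed. So $C_0([u])$ acts on the induced space by multiplication through $r$. The key point is that $\Gamma_c(\CG_u,\mathscr{B})$ completes to a Morita equivalence between $C^*(\CG_u^u,\mathscr{B})$ and the restriction $C^*(\CG|_{[u]},\mathscr{B})$. This is the Fell-bundle analogue of Muhly--Renault--Williams equivalence, since $\CG|_{[u]}$ is transitive.

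\textbf{Step 3: irreducibility.} First, $C^*(\CG,\mathscr{B})\to C^*(\CG|_{[u]},\mathscr{B})$ is a quotient map when $[u]$ is closed, and more generally the ideal of $C^*(\CG|_{U},\mathscr{B})$ with $U$ open saturated is handled similarly. Hence the commutant of $\operatorname{Ind}L$ equals the commutant of a representation of $C^*(\CG|_{[u]},\mathscr{B})$. By Rieffel's correspondence, induction along an imprimitivity bimodule is an equivalence of representation categories that preserves intertwiner spaces. So $\operatorname{End}(\operatorname{Ind}L)\cong\operatorname{End}(L)=\bbC$, and $\operatorname{Ind}L$ is irreducible.

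\textbf{Main obstacle.} I expect the hard part to be Step 2: checking that the inner products are full and positive in the Fell bundle setting, and the general case where the orbit is not locally closed. For the latter I would follow Ionescu--Williams. They pass to the "induced from stability groups" picture and use that any operator commuting with $\operatorname{Ind}L$ commutes with the diagonal $C_0(\CG^{(0)})$ action. Such an operator is therefore decomposable over $\CG_u/\CG_u^u$ and equivariant, hence determined by a single fibre, where it commutes with $L$. Schur's lemma then gives the scalar.

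In our application $\CG^{(0)}=G$ with $\CG_g^g=C_G(g)$. Here $C^*(C_G(g),\Sigma_k|)$ is a twisted group algebra, whose irreducible representations are projective representations of $C_G(g)$.
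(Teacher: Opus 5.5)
The paper gives no argument of its own here; it imports Theorem~2.1 of Ionescu--Williams. Against that, your Steps~1--3 are a correct proof when the orbit $[u]$ is locally closed. In that case $\CG_u/\CG_u^u\cong[u]$, and $\CG_u$ is a $(\CG|_{[u]},\CG_u^u)$-equivalence. Moreover $\operatorname{Ind}L$ factors through $C^*(\CG|_{\overline{[u]}},\mathscr{B})$ and is nondegenerate on the ideal $C^*(\CG|_{[u]},\mathscr{B})$. Its commutant is therefore that of a Rieffel-induced representation, hence $\bbC$. Fullness of the inner products is exactly where saturation of $\mathscr{B}$ enters. Saturation is built into the Muhly--Williams definition of a Fell bundle used by Ionescu--Williams, though not into the appendix definition here, and it holds for the line bundles $\Sigma_k$. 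Since conjugacy classes are closed for every group the paper considers, this case already covers the paper's application.

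The theorem as stated concerns arbitrary orbits, and there your fallback paragraph has a real gap. If $[u]$ is not locally closed, $\CG|_{[u]}$ is not locally compact. Then $C^*(\CG|_{[u]},\mathscr{B})$ is not available in the usual sense, and Steps~2--3 have nothing to act on. Your replacement sketch has three problems:
\begin{itemize}
\item ``Decomposable over $\CG_u/\CG_u^u$'' silently needs the continuous bijection $\CG_u/\CG_u^u\to[u]$ to be a Borel isomorphism (Lusin--Souslin, using second countability). Only then do the multiplications by $\phi\circ r$, $\phi\in C_0(\CG^{(0)})$, generate all of $L^\infty(\CG_u/\CG_u^u,\mu)$.
\item ``Equivariant'' needs the Muhly--Williams disintegration theorem, and it gives an intertwining relation only almost everywhere.
\item ``Determined by a single fibre'' is not a valid step. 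Individual fibres are null as soon as the orbit is uncountable, so you cannot restrict to one.
\end{itemize}

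The missing idea is to replace $\CG|_{[u]}$ by the imprimitivity groupoid $\underline{\CG}=(\CG_u\times\CG_u)/\CG_u^u$. Its unit space is $\CG_u/\CG_u^u$ with the \emph{quotient} topology. Then $\CG_u$ is a $(\underline{\CG},\CG_u^u)$-equivalence with no hypothesis on the orbit, and your Rieffel argument shows $\operatorname{Ind}^{\underline{\CG}}L$ is irreducible. Both induced representations act on the same completion of $\Gamma_c(\CG_u,\mathscr{B})\odot\mathcal{H}_L$. The map $[x,y]\mapsto xy^{-1}$ is a Borel isomorphism of $\underline{\CG}$ onto $\CG|_{[u]}$ that matches their disintegrations. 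The commutant of an integrated form is characterized by a.e.\ intertwining of that Borel data, so the two commutants coincide and equal $\bbC$. In outline, this is how the cited result avoids any local-closedness assumption.
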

In our case of compact Lie group $G$, the inertia groupoid $\CG =\gmg$ is locally compact with a Haar system induced from the Haar measure, and the Fell line bundle is also separable. For $g\in G = \CG^{(0)}$, the stability group $\CG_g^g$ is nothing more than the centralizer $C_G(g)$. On $C_G(g) = \CG_g^g$, the cocycle $\tau(k)$ gives a map (c.f. Section~\ref{sec:fellbundle})
\be\label{eq:transgrprojrep} \tau(k)_g:  C_G(g) \times C_G(g) \to U(1)  \ee
with cocycle identity given by~\eqref{eq:cocycle}. This means irreducible representations on $C^*(\CG_g^g,\Sigma_k)$ are the irreducible projective representations of $C_G(g)$. Thus we have the following corollary
\begin{corollary}
    Suppose $G$ is a compact connected Lie group (or a non-compact connected reductive algebraic group over $\bbR$ or $\bbC$), $\Sigma_k$ is a Fell line bundle over $\CG = G//_{\rm Ad} G$ classified by $\tau(k)\in H^2(\gmg,U(1))$. For any $g\in G$, if $L$ is an irreducible representation of $C^*(\CG_g^g ,\Sigma_k)$, then the induced representation ${\rm Ind}_{\CG^g_g}^{\CG}L$ is an irreducible representation of $C^*(\CG ,\Sigma_k)$.
\end{corollary}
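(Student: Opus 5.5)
The corollary follows from Theorem 2.1 of~\cite{Ionescu:2013}, quoted above, once its hypotheses are checked for $\CG=\gmg$ with the Fell line bundle $\Sigma_k$. So the proof consists of verifying three things: that $\gmg$ is a locally compact (second countable, Hausdorff) groupoid with a Haar system; that $\Sigma_k$ is a separable Fell bundle over it; and that the stability group at $g$ is $C_G(g)$, carrying the restricted bundle. After that, the induction statement is a direct application of the theorem.

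\textbf{Step 1: the groupoid.} As a topological space, $\CG^{(1)}=G\times G$ with $s(g,x)=g$ and $t(g,x)=xgx^{-1}$. This is a transformation groupoid $G\ltimes G$ for the adjoint action. Since $G$ is a second countable Lie group, $\CG$ is locally compact, Hausdorff and second countable, and $s,t$ are open maps. A Haar system is obtained by transporting Haar measure $d\Gamma$ along the fibres: set $\lambda^g=\delta_g\times d\Gamma$ on $\CG^g\cong G$. Left invariance of this system reduces to left invariance of $d\Gamma$, and continuity of $g\mapsto\int f\,d\lambda^g$ follows from dominated convergence, using compact supports. This holds for both classes of $G$ in the statement. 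It needs only local compactness and the existence of Haar measure; the factor $\bbR$ or $GL(1,\bbC)$ is unimodular and amenable, so the same formulas work. In the non-unimodular case one would add the modular function, as in the footnote to $DC_0(G)$.

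\textbf{Step 2: the Fell bundle.} By the transgression discussion, $\tau(k)\in H^2(\gmg,U(1))$ gives a $U(1)$-central extension $\Sigma_k\to\gmg$. Its associated line bundle $\CL=(\Sigma_k\times\bbC)/U(1)$ is a Fell line bundle in the sense of Section~\ref{sec:fellbundle}: multiplication and involution are inherited from $\Sigma_k$, and the fibres are $\bbC$ with $C^*$-norm. Separability means the base is second countable and the bundle is a continuous Banach bundle with a countable family of sections spanning each fibre; this holds because $\CL$ is a locally trivial line bundle over a second countable space. The step I expect to need the most care is continuity. One must make sure that the class $\tau(k)$, a priori defined stack-theoretically or via the equivalence $H^3(\gmg,\bbZ)\cong H^2(\gmg,U(1))$ of~\cite{Tu:2003ph}, is represented by a \emph{continuous}, or at least locally trivial, extension. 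This is what makes $\CL$ an honest upper semicontinuous Banach bundle rather than a merely measurable one. I would handle this by working with a Morita-equivalent groupoid pulled back along an open cover. In that setting~\cite{Tu:2003ph} gives genuine continuous cocycles, and Fell bundle $C^*$-algebras and their irreducible representations are invariant under such equivalences.

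\textbf{Step 3: stabilizers and the conclusion.} We have $\CG^g_g=\{(g,x):xgx^{-1}=g\}\cong C_G(g)$, a closed subgroup, so the restriction $\Sigma_k|_{C_G(g)}$ is a Fell line bundle over a locally compact group. Its $C^*$-algebra is the twisted group algebra $C^*(C_G(g),\tau(k)_g)$, where $\tau(k)_g$ is the cocycle~\eqref{eq:transgrprojrep}. Hence an irreducible $L$ is exactly an irreducible $\tau(k)_g$-projective representation of $C_G(g)$. With all hypotheses verified, Theorem 2.1 of~\cite{Ionescu:2013} gives that ${\rm Ind}_{\CG^g_g}^{\CG}L$ is irreducible. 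This induced representation can be realized concretely on $L^2$-sections over the orbit $[g]\cong G/C_G(g)$ with values in $L$.
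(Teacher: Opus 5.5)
Your proposal is correct and follows the paper's own argument. The paper likewise only checks that $\gmg$ is locally compact with a Haar system induced from Haar measure, that the Fell line bundle is separable, and that $\CG_g^g=C_G(g)$ carries the restricted cocycle $\tau(k)_g$, and then invokes Theorem 2.1 of~\cite{Ionescu:2013}. Your Step 2 point, that $\tau(k)$ must be realized by a genuinely continuous extension (over a Morita-equivalent groupoid if necessary), is a worthwhile addition that the paper passes over silently, not a change of route.
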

One can verify the equivalence of induced representations when we choose different elements in the same conjugacy class. When the groupoid $\gmg$ is amenable (iff the group $G$ is amenable), all irreducible representations are induced representations from $\Rep(C^*(\CG_g^g ,\Sigma_k))$~\cite{Ionescu:2009}. The amenability condition is always satisfied for compact groups.

With the above remarks, we are able to state the following proposition:
\begin{prop}
    Suppose $G$ is a compact connected Lie group (or a non-compact connected amenable reductive algebraic group over $\bbR$ or $\bbC$), $\Sigma_k$ is a Fell line bundle over $\CG = G//_{\rm Ad} G$ classified by $\tau(k)\in H^2(\gmg,U(1))$. The irreducible representations of $C^*(\CG,\Sigma_k)$ are classified by
    \be ([g], R )  \ee
    where $[g]$ is a conjugacy class of $G$ and $R$ is an irreducible projective representation of the centralizer $C_G(g)$ twisted by $\tau(k)_g$.
\end{prop}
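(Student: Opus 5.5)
The proof will split into three parts: existence via induction, exhaustion via amenability, and the bijection with pairs $([g],R)$.

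\emph{Existence.} Fix $g\in G=\CG^{(0)}$. The isotropy group is $\CG_g^g=C_G(g)$. Restricting the Fell line bundle $\Sigma_k$ to $C_G(g)$ gives a $U(1)$-central extension of $C_G(g)$, classified by the 2-cocycle $\tau(k)_g$ of \eqref{eq:transgrprojrep}. The cocycle identity \eqref{eq:cocycle} restricted to $\CG_g^g$ is the ordinary group 2-cocycle condition. Hence $C^*(C_G(g),\Sigma_k|_{C_G(g)})$ is the twisted group $C^*$-algebra $C^*(C_G(g),\tau(k)_g)$. By the usual correspondence between twisted convolution algebras and projective representations, its irreducible representations are exactly the irreducible $\tau(k)_g$-projective unitary representations $R$ of $C_G(g)$. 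The preceding Corollary (Theorem 2.1 of~\cite{Ionescu:2013}) then shows that $\mathrm{Ind}_{C_G(g)}^{\CG}R$ is irreducible. This gives a map $(g,R)\mapsto \mathrm{Ind}\,R$.

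\emph{Well-definedness on conjugacy classes.} Take $h=xgx^{-1}$. The arrow $(g,x)$, lifted to a chosen element of the fiber of $\Sigma_k$, conjugates $C_G(g)$ onto $C_G(h)$. It carries $\tau(k)_g$ to a cocycle cohomologous to $\tau(k)_h$, with the coboundary coming from the choice of lift. Transporting $R$ along this map gives a representation $R^x$ of $C_G(h)$. I would show that $\mathrm{Ind}_{g}R\cong \mathrm{Ind}_{h}R^x$ by writing down the unitary on the induced Hilbert spaces, which are completions of sections over $\CG_g$ and $\CG_h$, given by right translation by the lifted arrow.

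Conversely, I would show that $\mathrm{Ind}_g R\cong \mathrm{Ind}_{g'}R'$ forces $[g]=[g']$ and $R'\cong R^x$. The first step uses that the induced representation is supported on the orbit $[g]$. Its restriction to $C_0(G)\subset M(C^*(\CG,\Sigma_k))$ has spectral measure concentrated on $[g]$, and conjugacy classes are distinct closed orbits for compact $G$. Then I would recover $R$ from $\mathrm{Ind}_gR$ by disintegrating over the orbit and taking the fiber at $g$ as a $C_G(g)$-module; this is essentially Mackey's imprimitivity for the transitive groupoid $\CG|_{[g]}\simeq G\times_{C_G(g)}\ldots$. Morita equivalence of $C^*(\CG|_{[g]},\Sigma_k)$ with $C^*(C_G(g),\tau(k)_g)$, via the equivalence theorem for Fell bundles over equivalent groupoids, gives this cleanly.

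\emph{Exhaustion.} This step is the main obstacle. Amenability of $G$, compact or of the allowed $\mathbb{R}$/$GL(1,\bbC)$ type, gives amenability of $\gmg$, so full and reduced algebras agree. The result of~\cite{Ionescu:2009} (the Effros–Hahn / GRS-type statement) then says every irreducible representation, or at least every primitive ideal, is induced from an isotropy group. To upgrade the ideal-level statement to representations, I would use that for compact $G$ the orbit space $G/\mathrm{Ad}$ is Hausdorff and orbits are closed. Thus $\CG$ has a regular, Hausdorff orbit space, and any irreducible representation restricted to $C_0(G)$ has central support a single orbit, by ergodicity plus the smooth orbit space. This reduces to the transitive groupoid $\CG|_{[g]}$. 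There the Morita equivalence above finishes the proof. For the non-compact abelian factors $\mathbb{R}$ and $GL(1,\bbC)$, conjugation is trivial on those factors, so orbits remain closed and the same argument applies.
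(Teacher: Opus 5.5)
Your proposal is correct, and its skeleton matches the paper's. You use irreducibility of $\mathrm{Ind}_{C_G(g)}^{\CG}R$ from Theorem 2.1 of~\cite{Ionescu:2013}. You identify the isotropy algebra with the $\tau(k)_g$-twisted group algebra of $C_G(g)$. You check invariance under moving $g$ within $[g]$, and you get exhaustion from amenability via the generalized Effros--Hahn result of~\cite{Ionescu:2009}.

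You go beyond the paper in two places.
\begin{itemize}
\item You prove that the labelling is injective. The restriction of $\mathrm{Ind}_gR$ to $C_0(G)$ is supported on the closed orbit $[g]$, and the Fell-bundle equivalence theorem makes $C^*(\CG|_{[g]},\Sigma_k)$ Morita equivalent to the twisted group algebra of $C_G(g)$. Hence $R$ is recovered up to transport along a conjugating arrow. The paper leaves this half of the classification implicit.
\item You rightly note that Effros--Hahn is a statement about primitive ideals, not unitary equivalence classes, and you upgrade it using that $G/\mathrm{Ad}$ is Hausdorff.
\end{itemize}

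That second argument in fact makes the Effros--Hahn citation, and with it the amenability hypothesis, unnecessary. Class functions are central multipliers of $C^*(\CG,\Sigma_k)$, so they act by scalars in any irreducible representation. The representation therefore factors through the quotient $C^*(\CG|_{[g]},\Sigma_k)$ for a single closed orbit, using exactness of the restriction sequence for full algebras. The Morita equivalence then gives exhaustion and the bijection at once.

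The paper's route is shorter because it quotes results valid for arbitrary amenable groupoids. Yours is self-contained for the groups at hand, and it shows that the real input is that conjugacy classes are closed and the orbit space is Hausdorff.
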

and with that we state:
\begin{prop}\label{prop:centerclassify}
    The simple objects of $\CZ(\hilb^k(G))$ are classified by
    \be ([g], R )  \ee
    where $[g]$ is a conjugacy class of $G$ and $R$ is an irreducible projective representation of the centralizer $C_G(g)$ twisted by $\tau(k)_g$.
\end{prop}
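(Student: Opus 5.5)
The proof reduces Proposition~\ref{prop:centerclassify} to the preceding proposition on irreducible representations of $C^*(\CG,\Sigma_k)$. The reduction uses an equivalence of categories
\be \CZ(\hilb^k(G)) \simeq \Rep\big(C^*(\gmg,\Sigma_k)\big) \ , \ee
under which simple objects of the center correspond to irreducible representations of the Fell bundle algebra.

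\textbf{Step 1: the center as representations of a twisted double.} I would first realize $\CZ(\hilb^k(G))$ as the representation category of a twisted quantum double. In the non-anomalous case, the quantum double proposition gives $\CZ(\Rep(C_0(G)))=\Rep(DC_0(G))$. Its underlying $C^*$-algebra is $C_0(G)\rtimes_{\rm Ad}G\cong C^*(\gmg,\Sigma)$ with $\Sigma$ trivial. Since $C_0(G,\rho)$ is Morita equivalent to $C_0(G)$ after forgetting the twist, $\hilb^k(G)$ has the same underlying $C^*$-category as $\hilb(G)$. Only the associator changes, by the coherence $\beta$ coming from $\alpha$.

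\textbf{Step 2: half-braidings as twisted equivariance.} A half-braiding on an object $\int^\oplus \CH_g\,d\mu(g)$ is a natural unitary $\delta_x\otimes \CH\cong \CH\otimes\delta_x$ for all $x\in G$. Fiberwise this is a family of unitaries $\CH_g\to\CH_{xgx^{-1}}$, measurable in $(g,x)$. The half-braiding hexagon, with the associator $\beta_{g,h,k}$ inserted, forces these unitaries to compose up to the phase
\be
\nu\big((g_1,s_1),(g_2,s_2)\big)=\beta(s_1g_1s_1^{-1},s_1,s_2)\,\beta^{-1}(s_1,g_1,s_2)\,\beta(s_1,s_2,g_2) \ .
\ee
This is the continuous analogue of the formula of~\cite{Bhowmick:2018} recalled in Section~\ref{sec:fellbundle}.

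\textbf{Step 3: identifying the twist.} Next I would check that the class of $\nu$ in $H^2(\gmg,U(1))$ is $\tau(k)$. I would argue this through the chain $H^4(BG,\bbZ)\to H^3_G(G,\bbZ)\cong H^3(\gmg,\bbZ)\cong H^2(\gmg,U(1))$, together with the multiplicative gerbe description of $k$. Such data are then exactly integrable (nondegenerate) representations of the twisted groupoid algebra $C^*(\gmg,\Sigma_k)$. Morphisms match as intertwiners, so the equivalence preserves simple objects.

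\textbf{Step 4: applying the classification.} The preceding proposition now classifies the irreducibles as $([g],R)$. This uses the induction theorem of~\cite{Ionescu:2013} for irreducibility, and amenability via~\cite{Ionescu:2009} for exhaustiveness. Independence of the representative $g$ in the conjugacy class follows because conjugate stabilizers give unitarily equivalent induced representations.

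\textbf{Main obstacle.} The hard part is Steps 2--3: making the equivalence rigorous in the continuous setting. Half-braidings must be natural with respect to direct integrals, not just simple objects. Measurability in $x$ must be extracted so that one gets a genuine representation of the groupoid $C^*$-algebra. I would first attempt this by testing naturality against the simple objects $\delta_x$ and using disintegration (Renault's theorem) to pass to a Borel representation of $\Sigma_k$. Showing that this captures all half-braidings is where I expect the argument may need extra hypotheses.
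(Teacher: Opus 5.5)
Your skeleton is the paper's. The paper also obtains Proposition~\ref{prop:centerclassify} by identifying $\CZ(\hilb^k(G))$ with $\Rep(C^*(\gmg,\Sigma_k))$ and then classifying irreducibles. Your Step~4 is exactly its argument: Ionescu--Williams induction for irreducibility, amenability plus generalized Effros--Hahn for exhaustiveness, and independence of the representative of $[g]$.

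The difference is the bridge. The paper does not derive the equivalence. It motivates it in Section~\ref{sec:intuitioncenter} by looping $k\colon BG\to B^3U(1)$ to $\tau(k)\colon[G/G]\to B^2U(1)$, and by appealing to the cobordism hypothesis, quantization of $\Sigma_k$, and the finite and untwisted cases. On that route the twist is $\tau(k)$ by construction, but nothing ties half-braidings to $\Sigma_k$-representations. Your half-braiding route, if completed, would actually prove the equivalence from the definition of the center. The price is that matching the cocycle built from $\beta$ with $\tau(k)$ becomes a real comparison theorem, which your Step~3 only asserts through the same chain of isomorphisms the paper quotes.

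As sketched, Steps~1--3 do not yet close this bridge.

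(i) Testing naturality only on the $\delta_x$ cannot work. We have $\mathrm{Hom}(\delta_x,\int^\oplus\mathcal{H}_g\,d\mu(g))=0$ whenever $\mu(\{x\})=0$, so naturality does not propagate anything from the $\delta_x$ to diffuse objects. The $\delta_x$-components alone also carry no measurability in $x$. Already for $G=U(1)$, $k=0$ and central object $\delta_e$, $\delta_x$-naturality plus the hexagon only give $c_{xy}=c_xc_y$. This admits every character of $U(1)$ as a \emph{discrete} group, far more simple objects than $\bbZ$. What fixes this is the component at a diffuse object such as $\int^\oplus_G\delta_x\,d\Gamma(x)$, tied to the $c_x$ through the hexagon. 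It forces measurability, hence continuity. In general it yields an a.e.-defined measurable family $\mathcal{H}_g\to\mathcal{H}_{xgx^{-1}}$, with $[\mu]$ conjugation-quasi-invariant, which then integrates to a representation of $C^*(\gmg,\Sigma_k)$. In the converse direction, Renault's disintegration gives such a family, but you must strictify it to get components at singular objects like $\delta_x$.

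(ii) The associator phases $\beta$ are only Borel. A nonzero $k$ generally has no continuous $3$-cocycle representative: $H^3_{\rm cont}(SU(2),U(1))=0$, and the paper's own $U(1)\times U(1)$ cocycle is discontinuous. So your $\nu$ is a Borel groupoid cocycle, and you must still show it defines the topological twist.

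(iii) The Morita argument in Step~1 only covers $\rho=0$. $C_0(G,\rho)$ is a continuous-trace algebra with Dixmier--Douady class $[\rho]$, which is nonzero, e.g.\ for $G=SU(2)$ and $k\neq0$. So identifying the underlying categories needs a Borel trivialization instead. Relatedly, $\tau(k)$ maps to this nonzero class in $H^3(G,\bbZ)$, while a central extension of $\gmg$ itself restricts trivially to the unit space. Hence $\Sigma_k$ must live over a Morita-equivalent groupoid, or have fibres $\mathcal{K}(L^2(Y_g))$ over units. This does not affect Step~4, since isotropy groups and the restricted twists $\tau(k)_g$ are Morita invariant and Ionescu--Williams applies to general separable Fell bundles.
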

This result matches the direct generalization from finite group scenarios.

We would like to discuss the continuous analog of modular structure at the level of simple objects. Since there are fatal analytical obstructions for non-Abelian groups and non-compact groups, we only discuss compact connected Abelian groups. Suppose $G$ is a compact connected Abelian Lie group, every $g\in G$ is a conjugacy class and $C_G(g) = G$. In this case the simple objects in $\CZ(\hilb^k(G))$ are labeled by $\big(g\in G,R\in {\rm IrRep}^{\tau(k)_g}(G)\big)$. Then the analog of modular matrices given by~\cite{Coste:2000tq}
\be\ba S_{(g,R_1),(h,R_2)} &= \left(\frac{\tau(k)_g(h,e)\tau(k)_h(g,e)}{\tau(k)_g(e,h)\tau(k)_h(e,g)}\right) \chi_{R_1}(h) \chi_{R_2}(g)  \ , \\
T_{(g,R_1),(h,R_2)} &= \delta_{g,h}\delta_{R_1,R_2} \frac{\chi^*_{R_1}(g)}{\chi_{R_1}(e)} \ .
\ea\ee
The continuous analog of modular matrices should be more properly considered as a distribution kernel.
\begin{exm}[$U(1)\times U(1)$]
    Suppose we have $G= U(1)\times U(1) $ with anomaly $\omega\in H^3(B(U(1)_m\times U(1)_w),\underline{U(1)})$ given by~\cite{Jia:2025uun} ($[a+b]:= a+b \ {\rm mod} \ 2\pi$)
    \be
\omega((e^{i\alpha_1},e^{i\beta_1}),(e^{i\alpha_2},e^{i\beta_2}),(e^{i\alpha_3},e^{i\beta_3}))=\exp\left(-\frac{i}{2\pi}\alpha_1(\beta_2+\beta_3-[\beta_2+\beta_3])\right)\,.
\ee
Thus we have \be \label{eq:projdata}\tau(k)_{(e^{i\alpha_1},e^{i\beta_1})}\big((e^{i\alpha_2},e^{i\beta_2}),(e^{i\alpha_3},e^{i\beta_3})\big) = \exp\left(-\frac{i}{2\pi}\alpha_1(\beta_2+\beta_3-[\beta_2+\beta_3])\right)\,,\ee
which is the same to $\omega$. It is then easy to deduce 
\be\tau(k)_g(h,e) = \tau(k)_g(e,h) = 1\ , \ \forall g,h\in G \ . \ee
The irreducible projective representation twisted by~\eqref{eq:projdata} is
\be \pi_{(m_1,m_2)}\big( (e^{i\alpha},e^{i\beta}) \big) = \exp\left(  i \left(m_1\alpha + m_2\beta-\frac{\alpha_1\beta}{2\pi}\right) \right)\,. \ee
Thus we have
\be\ba
\label{eq:STMats}
S_{(e^{i\theta_1},e^{i\theta_2},m_1,m_2),(e^{i\tilde{\theta}_1},e^{i\tilde{\theta}_2},\tilde{m}_1,\tilde{m}_2)}&=\exp\left(i(\theta_1\tilde{m}_1+\tilde{\theta}_1 m_1+\theta_2\tilde{m}_2+\tilde{\theta}_2 m_2-\frac{1}{2\pi}(\theta_1\tilde{\theta}_2+\theta_2\tilde{\theta}_1))\right) \\
T_{(e^{i\theta_1},e^{i\theta_2},m_1,m_2),(e^{i\tilde{\theta}_1},e^{i\tilde{\theta}_2},\tilde{m}_1,\tilde{m}_2)} &= \delta_{\theta_1,\tilde{\theta}_1}\delta_{\theta_2,\tilde{\theta}_2} \delta_{m_1,\tilde{m}_1}\delta_{m_2,\tilde{m}_2}\exp\left(  -i \left(m_1\theta_1 + m_2\theta_2-\frac{\theta_1\theta_2}{2\pi}\right) \right) \ .
\ea\ee
\end{exm}

\section{Physical Examples}\label{sec:physicalexm}

We would like to discuss the flat gauging of continuous Lie group symmetries in physical systems. The central example would be the $2$D compact real scalar $\phi$ with radius $R$, which has the $(U(1)_m\times U(1)_w)\rtimes\mb{Z}_2$ 0-form global symmetry at a generic radius $R\neq 1$, and the enhanced $(SU(2)_L\times SU(2)_R)/\mb{Z}_2$ 0-form global symmetry at the self-dual radius $R=1$. 

\subsection{$U(1)$ and $\mb{R}$}

Let us consider the $2$D compact scalar at the generic radius $R\neq 1$. The $U(1)_m\times U(1)_w$ 0-form global symmetry generated by the currents
\be
j^{(m)}=*d\phi\ ,\ j^{(w)}=d\phi
\ee
that couple to the background gauge fields $A^{(m)}$ and $A^{(w)}$, and has a mixed 't Hooft anomaly
\be
\label{U1U1-anomaly}
I=\frac{1}{2\pi}A^{(m)}\wedge dA^{(w)}\,.
\ee
If one gauge the non-anomalous finite subgroups $\mb{Z}_p\subset U(1)_m$ and $\mb{Z}_q\subset U(1)_w$, gcd$(p,q)=1$, it is known that the radius of the compact scalar is transformed to $R=q/p$. 

Now we consider the flat gauging of $U(1)_w$, after which the radius $R\rightarrow\infty$. Hence the gauged theory is a non-compact real scalar $\phi$, with an $\mb{R}$ 0-form global shifting symmetry $\phi\rightarrow\phi+a$ $(a\in\mb{R})$.

While the discussion of flat gauging for the compact boson was carried out in the BF theory language in \cite{Argurio:2024ewp}, we would like to interpret this physical story in our categorical language.

We start with the SymTFT of $U(1)_m\times U(1)_w$ with a twist $\omega\in H^3(B(U(1)_m\times U(1)_w),\underline{U(1)})$ corresponding to the 't Hooft anomaly~\eqref{U1U1-anomaly}. In our categorical language, it is $\CZ(\hilb^\omega(U(1)_m\times U(1)_w))=\Rep(C^*(\gmg, \Sigma_\omega))$, where $G = U(1)_m\times U(1)_w$ and $\Sigma_\omega$ is the Fell bundle classified by $\omega$. As in~\cite{Jia:2025vrj}, the representations are labeled by $(e^{i\theta_1},e^{i\theta_2},m_1,m_2)$. $e^{i\theta_1}$ and $e^{i\theta_2}$ are conjugacy classes of $U(1)_m$ and $U(1)_w$, and $m_1,m_2\in\mb{Z}$ label the projective representations of $U(1)_m\times U(1)_w$.

The explicit representative of the group cohomology 3-cocycle $\omega$ can be found in~\cite{Jia:2025uun}:
\be
\omega((e^{i\alpha_1},e^{i\beta_1}),(e^{i\alpha_2},e^{i\beta_2}),(e^{i\alpha_3},e^{i\beta_3}))=\exp\left(-\frac{i}{2\pi}\alpha_1(\beta_2+\beta_3-[\beta_2+\beta_3])\right)\,.
\ee

The two anyon line operators in the 2+1d SymTFT $L_{(e^{i\theta_1},e^{i\theta_2},m_1,m_2)}$ and $L_{(e^{i\tilde{\theta}_1},e^{i\tilde{\theta}_2},\tilde{m}_1,\tilde{m}_2)}$, have the braiding correlation function (c.f.~\eqref{eq:STMats})
\be
\label{theta-linking}
\langle L_{(e^{i\theta_1},e^{i\theta_2},m_1,m_2)}L_{(e^{i\tilde{\theta}_1},e^{i\tilde{\theta}_2},\tilde{m}_1,\tilde{m}_2)}\rangle=\exp\left(i(\theta_1\tilde{m}_1+\tilde{\theta}_1 m_1+\theta_2\tilde{m}_2+\tilde{\theta}_2 m_2-\frac{1}{2\pi}(\theta_1\tilde{\theta}_2+\theta_2\tilde{\theta}_1))\right)\,.
\ee
Thus we can observe the identification between anyon lines with different labeling $(\theta_1,\theta_2,m_1,m_2)\sim (\theta_1+2\pi,\theta_2,m_1,m_2+1)$, and $(\theta_1,\theta_2,m_1,m_2)\sim (\theta_1,\theta_2+2\pi,m_1+1,m_2)$. For this reason, the anyon lines in the SymTFT are equivalently described by a set of parameters $(x,y)\in\mb{R}\times\mb{R}$:
\be
x=2\pi m_2-\theta_1\ ,\ y=2\pi m_1-\theta_2\,.
\ee
The braiding correlation function is
\be
\label{xy-linking}
\langle L_{(x,y)}L_{(\tilde{x},\tilde{y})}\rangle=\exp\left(-\frac{i}{2\pi}(x\tilde{y}+y\tilde{x})\right)\,.
\ee
It is straightforward to see the equivalence between~\eqref{theta-linking} and~\eqref{xy-linking}.

From~\eqref{xy-linking}, we see that $L_{(x,y)}$ labels the anyon lines in the SymTFT $\CZ(\hilb(\bbR))=\Rep(DC_0(\mb{R}))$, i.e. the representation category of the untwisted Drinfeld double of $C_0(\mb{R})$. When we choose the set of condensed operators in $\{L_{(x,y)}\}$ with mutually trivial linking correlation function, we can condense all $\{L_{(x,0)}\}$ or all $\{L_{(0,y)}\}$. They correspond to the boundary $2$D theory with a $\mb{R}$ global symmetry, i.e. the non-compact scalar obtained by the flat gauging of either $U(1)_m$ or $U(1)_w$ in the compact scalar.

On the other hand, we can also condense the sublattice $\mb{Z}^2\subset \mb{R}^2$ generated by $(x_1,y_1)$ and $(x_2,y_2)$, satisfying
\be
2x_1 y_1\ ,\ 2x_2 y_2\ ,\ x_1 y_2+y_1 x_2\in (2\pi)^2\mb{Z}\,.
\ee
Thus the sublattices are always generated by $(2\pi a,0)$ and $(0,2\pi/a)$, $a\in\mb{R}_+$.

The resulting theory is a compact scalar with $(U(1)_m\times U(1)_w)\rtimes \mb{Z}_2$ 0-form global symmetry, with in general an irrational radius $R=a$. 

\subsection{$SU(2)$ and $SO(3)$}

Now we consider the compact boson at self-dual radius $R=1$, with $(SU(2)_L\times SU(2)_R)/\mb{Z}_2\cong SO(4)$ 0-form global symmetry.\footnote{In fact, the chiral algebra at the self-dual radius is $\widehat{su(2)}_1{}_L \times \widehat{su(2)}_{-1}{}_R$, if we just take the simply connected group, the element $(-1,-1)\in SU(2)_L\times SU(2)_R$ would act trivially, hence the global structure of the symmetry groups is $(SU(2)_L\times SU(2)_R)/\mb{Z}_2$.} The two $SU(2)$ factors have a mixed anomaly
\be
\label{SU(2)-anomaly}
I=\frac{1}{2\pi}\left(\mathrm{CS}_3(A_L)-\mathrm{CS}_3(A_R)\right)\,.
\ee
The symmetry category in our language is $\hilb^\omega(SO(4))$, where $\omega\in H^3(BSO(4),\underline{U(1)})$ corresponding to the anomaly in~\eqref{SU(2)-anomaly}.

Nonetheless, one can flat-gauge the non-anomalous diagonal subgroup $SU(2)/\mb{Z}_2\cong SO(3)$. Flat gauging of this diagonal $SO(3)$ was studied in~\cite{Gaberdiel:2011aa} under the name of \emph{continuous orbifold}. The full partition function of the flat-gauged theory coincides with one half of the partition function of a non-compact boson. At first sight, one might therefore suspect that the resulting theory is simply the decompactification limit of the free boson on $S^1/\mathbb Z_2$, namely the infinite-radius endpoint of the orbifold branch. However, the authors showed that the gauged theory should instead be viewed as a new nontrivial non-rational $c=1$ CFT, identified with the Runkel-Watts (RW) model~\cite{Runkel:2001ng}, which arises as the $c\to 1$ limit of Virasoro minimal models.

For our purposes, the key point is that the relation between the continuous-orbifold construction and the minimal-model limit provides strong evidence that this wider class of continuous-orbifold CFTs is genuinely well defined. From this perspective, flat gauging may serve as a useful tool for probing the CFT landscape. We also recommend that interested readers consult \cite{Roggenkamp:2003qp} and the dissertation \cite{Restuccia:2013tba} for further related discussions.

After gauging the diagonal $SO(3)$ global symmetry, the new theory contains a remaining 0-form double set symmetry $SO(3)\backslash SO(4)/SO(3)$, as well as a dual Rep$(SO(3))$ non-invertible 0-form symmetry. Due to the mixed anomaly (\ref{SU(2)-anomaly}), these two factors are not a direct product. The resulting categorical symmetry $\mc{C}_{RW}$ is the category of $SO(3)$-bimodules of the original symmetry category (abusing the similar notation as in the case of finite symmetries~\cite{Bhardwaj:2017xup,Tachikawa:2017gyf}):
\be
\mc{C}_{RW}=\mathrm{Bimod}_{SO(3)}\hilb^\omega(SO(4))\,.
\ee

\section{Conclusion and Outlook}\label{sec:outlook}
In this work, we constructed the symmetry category for anomalous compact Lie group symmetries and calculated the Drinfeld center, enabling a convenient physical analysis of flat gauging using continuous SymTFT.

A promising application in condensed matter physics is to consider gapped and gapless boundary conditions of continuous SymTFT (``the categorical Landau paradigm''~\cite{Bhardwaj:2023fca,Bhardwaj:2023bbf}) and classify phases with Lie group symmetries. Different from finite symmetries, here we will expect gaplessness resulted from Goldstone bosons when we break Lie group symmetries. We should also expect phases with continuous non-invertible symmetries as a result of flat gauging non-Abelian Lie groups.

This formulation of continuous symmetry categories invites a wide range of generalizations. For systems in higher dimensions, we expect to have an $n$-category $n\hilb^\kappa(G)$. There has been various constructions for $n\hilb$ category~\cite{Baez:1995xq,Baez:2008hz}, but there is no existing literature describing the ``$G$-graded'' version. Even more generically, invertible symmetries forms a higher-group structure~\cite{Gaiotto:2014kfa,Cordova:2018cvg,Liu:2024znj}, the higher representation theory of continuous higher-group symmetries will become handy when dealing with QFTs with Green-Schwarz mechanism.

For fermionic systems, it would be interesting to incorporate the fermionic parity symmetry and the related Gu-Wen type anomalies, in which case the symmetry category is expected to be ${\rm sHilb}^\mu(G)$, the SymTFT and corresponding boundary conditions could also be formulated~\cite{Bhardwaj:2024ydc,Wen:2024udn}.

For open quantum systems where symmetries are decohered, there has been constructions of open SymTFT to classify mixed-state phases with strong and weak symmetries and symmetry breakings~\cite{Luo:2025phx,Schafer-Nameki:2025fiy,Qi:2025tal}, the same could be expected for continuous SymTFT.

We also put forward a series of open questions for mathematicians. During our quest for $\CZ(\hilb^k(G))$, we did not recover the entire braided tensor category structure, thus
\begin{question}
    What is the complete categorical structure of $\CZ(\text{Hilb}^k(G))$ within the framework of continuous categories? Furthermore, what is the rigorous analytic generalization of modularity that accommodates the continuous spectrum of simple objects in this center?
\end{question}
With answer to this question, we should consider taking the category to a $3$D TQFT, which physicists tend to formulate as a $BF+k CS$ theory. The mathematical conundrum is the dualizability requirement in the target $3$-category when we build the fully-extended TQFT, so:
\begin{question}
    Can we construct a $3$D (fully-)extended TQFT $Z$ assigning $Z(S^1) = \CZ(\text{Hilb}^k(G))$ by choosing the target 3-category to incorporate appropriate analytic data? If so, what replaces the standard finiteness obstructions of fully dualizable objects in this continuous regime?
\end{question}
With the $3$D TQFT $Z$ well-defined, we intend to classify the topological boundary conditions (gapped boundaries) of $Z$ as well as the conformal boundary conditions (gapless boundaries), leading to the question:
\begin{question}
    What is the classification of possible topological and conformal boundary conditions for $Z$?
\end{question}
Ultimately, resolving these questions requires the mathematical formulation of TQFTs to transcend certain conditions enshrined in the standard dualizable framework, necessisating a analytic framework facilitated by functional analysis and operator algebras.

\acknowledgments
RL thanks Bin Gui for useful discussions. QJ is supported by National Research Foundation of Korea (NRF) Grant No. RS-2024-00405629 and Jang Young-Sil Fellow Program at the Korea Advanced Institute of Science and Technology. RL and YNW are supported by National Natural Science Foundation of China under Grant No. 12422503. JT is supported by National Natural Science Foundation of China under Grant No. 12405085 and by the Natural Science Foundation of Shanghai (Grant No. 24ZR1419300). JT would also like to thank Ying Zhang for her love and support. YZ is supported by WPI Initiative, MEXT, Japan at Kavli IPMU, the University of Tokyo.

\appendix

\section{Basic Operator Algebra}\label{app:OperA}

In this appendix we present the basic mathematical knowledge on \calg and von Neumann algebra needed in this paper.

Most quantum mechanics courses start with states and Hilbert spaces, then construct observables as linear operators. The intent of operator algebra is to take \emph{the algebra of observables} in a quantum system to be basic, and build Hilbert spaces through the representation theories of the operator algebra~\cite{Gelfand1987OnTE,Farah:2019,Takesaki:1979}.

\begin{defn}
    A {\rm $C^*$ algebra} $A$ is an algebra over $\bbC$ equipped with
    \begin{itemize}
        \item an antilinear map $*:A\to A$ s.t.
        \be (a^{*})^* = a \ , \ (ab)^* = b^*a^* \ , \ \forall a,b\in A \, ; \ee
        \item a norm $|| \cdot ||:A\to \bbR_{\geq 0}$ s.t. \be ||ab|\le ||a||\cdot \\ ||b|| \ , \ \forall a,b\in A\,,\ee and $A$ is complete under this norm;
        \item the $C^*$ property insisting $||a^*a|| = ||a||^2$.
    \end{itemize}
\end{defn}
The most commonly used example of \calg is the linear operators on a Hilbert space.
\begin{exm}
    For a Hilbert space $\CH$, we denote $\mathscr{B}(\CH)$ the algebra of bounded linear operators on $\CH$.
\end{exm}
Another example of $C^*$ algebra that we will frequently encounter is the commutative algebra of continuous functions.

\begin{exm}
    For a compact Lie group\footnote{This definition can be extended to the case when $G$ is only locally compact. We just require continuous functions vanish at infinity.} $G$, we denote $C_0(G)$ the algebra of continuous $\bbC$-valued functions on $G$. The algebra product is given by
    \be\ba \cdot: C_0(G)\times C_0(G) &\to C_0(G) \\ (f,h)&\mapsto fh  \ , \  
    fh(g) = f(g)h(g) \ .\ea\ee
    The norm is given by the supremum norm,
    \be ||f|| =  \sup_{g\in G} |f(g)| \ . \ee
\end{exm}
In fact, there is a theorem about the algebra of continuous functions:
\begin{theorem}[Gelfand-Naimark]\label{thm:GelfandNaimark}
    Every Abelian \calg is isomorphic to $C_0(X)$ for some locally compact Hausdorff space $X$.
\end{theorem}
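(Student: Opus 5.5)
We prove the commutative Gelfand--Naimark theorem through the Gelfand transform, dealing first with the unital case and then reducing the non-unital case to it by unitization.

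Let $A$ be an Abelian \calg and let $X=\widehat{A}$ be its character space, the set of nonzero multiplicative linear functionals $\chi:A\to\bbC$. First, each character is automatically bounded with $\|\chi\|\le 1$. If $|\chi(a)|>\|a\|$, then $a-\chi(a)$ would be invertible in the unitization via the Neumann series, and a character cannot vanish on an invertible element. Hence $X$ sits inside the unit ball of $A^*$, and we give it the weak-$*$ topology. The set $X\cup\{0\}$ is weak-$*$ closed, since multiplicativity is a pointwise condition. By Banach--Alaoglu it is compact, so $X$ is locally compact Hausdorff. In the unital case $X$ itself is compact, because $\chi(1)=1$ excludes $0$ from its closure.

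Next we define the Gelfand transform $\Gamma:A\to C_0(X)$ by $\Gamma(a)(\chi)=\chi(a)$. Continuity of each $\Gamma(a)$ is immediate from the weak-$*$ topology. Vanishing at infinity follows because $\{\chi:|\chi(a)|\ge\epsilon\}$ is a closed subset of the compact set $X\cup\{0\}$ that avoids $0$, hence is compact. Clearly $\Gamma$ is an algebra homomorphism.

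The key analytic input is $\|\Gamma(a)\|_\infty=r(a)$, the spectral radius. This rests on two facts. First, for commutative Banach algebras the spectrum satisfies $\sigma(a)=\{\chi(a)\}$, with the point $0$ added in the non-unital case. This in turn uses that maximal ideals are closed and have quotient $\bbC$ by Gelfand--Mazur. Second, we use Beurling's formula $r(a)=\lim\|a^n\|^{1/n}$. For self-adjoint $a$ the $C^*$ identity gives $\|a^2\|=\|a\|^2$, so $r(a)=\|a\|$. In general, commutativity makes $a^*a$ self-adjoint, so $\|a\|^2=\|a^*a\|=r(a^*a)=\|\Gamma(a^*a)\|_\infty=\|\Gamma(a)\|_\infty^2$. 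The last step needs $\Gamma$ to be a $*$-map, i.e. $\chi(a^*)=\overline{\chi(a)}$. This reduces to showing that self-adjoint elements have real spectrum. We show that via $u_t=e^{ita}$, which is unitary of norm $1$, so that $|e^{it\chi(a)}|\le 1$ for all real $t$, forcing $\chi(a)\in\bbR$. Thus $\Gamma$ is an isometric $*$-homomorphism, and in particular it is injective with closed image.

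Finally, surjectivity. The image $\Gamma(A)$ is a closed $*$-subalgebra of $C_0(X)$. It separates points of $X$ by the definition of characters. It vanishes identically at no point, since every $\chi\neq 0$. By the locally compact Stone--Weierstrass theorem, $\Gamma(A)$ is dense in $C_0(X)$, hence equal to it. I expect the main obstacle to be the spectral step, namely establishing $\sigma(a)=\Gamma(a)(X)$ together with reality of the spectrum. The non-unital case needs some care here, and it is cleanest to pass to $\widetilde{A}=A\oplus\bbC$ with its $C^*$ norm. Then $\widehat{\widetilde{A}}$ is the one-point compactification $X\cup\{\infty\}$, and $A$ corresponds exactly to the functions vanishing at $\infty$.
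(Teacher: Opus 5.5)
The paper does not prove this statement. It quotes the commutative Gelfand--Naimark theorem in Appendix~A as background, with a pointer to the standard operator-algebra references, so there is no in-paper argument to compare yours against. Your proposal is the standard Gelfand-transform proof, and it is correct.

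A few steps you treat as routine should be made explicit if you write it out:
\begin{itemize}
\item In the norm bound for characters, first extend $\chi$ to $\widetilde{\chi}(a+\mu)=\chi(a)+\mu$ on the unitization. Only then does the "a character cannot vanish on an invertible element" argument apply. For this step any Banach-algebra unitization suffices, e.g.\ with norm $\|a\|+|\mu|$.
\item In the non-unital case, your reality argument needs a genuine $C^*$-norm on $\widetilde{A}$. With the $\ell^1$-type unitization norm, $\|e^{ita}\|$ need not equal $1$. The existence of that norm, e.g.\ $\|(a,\mu)\|=\sup_{\|b\|\le 1}\|ab+\mu b\|$ for non-unital $A$, is a separate lemma you are citing.
\item Passing from real values on self-adjoint elements to $\chi(a^*)=\overline{\chi(a)}$ uses the decomposition $a=h+ik$ with $h,k$ self-adjoint.
\end{itemize}

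With these in place, isometry of $\Gamma$ together with the locally compact Stone--Weierstrass theorem finishes the proof exactly as you describe.
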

It is then natural to study the representation theory of \calgs, which describes how the \calg s act on Hilbert spaces.
\begin{defn}
    For a \calg $A$, a {\rm (left) $A$-module} is a Hilbert space $\CH$ with a *-homomorphism (a linear/antilinear map between *-algebras that preserves the algebra multiplication and the $*$ operation) from $A$ to the algebra of bounded operators on $\CH$, denoted $\mathscr{B}(\CH)$
    \be  \pi: A \to \mathscr{B}(\CH) \ . \ee
\end{defn}
In many literatures, a left module defined above is also denoted a representation of \calg.

Before discussing a specific example of \calg representation, we need to define one more concept:
\begin{defn}[Measurable Hilbert Bundle]
Let $(X, \mu)$ be a measurable space. A {\rm measurable Hilbert bundle} over $X$ is a family of separable complex Hilbert spaces $\{\mathcal{H}_x\}_{x \in X}$ together with a set $\mathcal{M}$ of sections (functions $s: X \to \prod_{x \in X} \mathcal{H}_x$ such that $s(x) \in \mathcal{H}_x$) s.t. the following property holds:
\begin{enumerate}
    \item For every $m\in \CM$, the function $||m(x)||$ is a measurable function on $X$.
    \item For every $m,n \in \mathcal{M}$, the function $x \mapsto \langle m(x), n(x) \rangle_{\mathcal{H}_x}$
    is a measurable function on $X$.
    \item There is a countable subset $\CM^{\prime} \subset \CM$ such that for any $x \in X$, the closure of the span of vectors $m(x)\left(m \in \CM^{\prime}\right)$ coincides with $H_x$.
\end{enumerate}
The space of square-integrable sections, denoted by
\be \int_X^\oplus \mathcal{H}_x \, d\mu(x) \ee
consists of measurable sections $s \in \mathcal{M}$ such that $\int_X \|s(x)\|^2 \, d\mu(x) < \infty$.
\end{defn}

\begin{exm}[${\rm Hilb}(G)$]\label{eg:HilbG}
    We denote $\Rep(C_0(G))$ the set of all representations of $C_0(G)$. Note that the \calg $C_0(G)$ is commutative, and it has been established~\cite{Takesaki:1979,Farah:2019} that the Hilbert space of $C_0(G)$ representation shall always take the form of
    \be
    \CH \cong \int_G^\oplus  \CH_g \  d\mu(g) \ ,
    \ee
    where $d\mu$ is a Radon measure\footnote{A Radon measure is a Borel measure that is always finite on compact spaces.} on $G$ (note that this is generically not the Haar measure) and $\CH_g$ is a measurable Hilbert bundle on $G$. Each vector in $\CH$ takes the form
    \be \xi = \int_G  \xi_g \ d\mu(g) \  , \ \xi_g \in \CH_g \ .\ee
    The action of $C_0(G)$ on $\CH$ is given by pointwise\footnote{In some literatures, such property is referred to as ``decomposable''.} multiplication
    \be \big(\pi(f)\cdot \xi\big)_g = f(g)\xi_g \ .  \ee
    We denote $\Rep(C_0(G))$ the set of all representations of $C_0(G)$.

    From this set, we can build a tensor category, denoted ${\rm Hilb}(G)$:
    \begin{itemize}
        \item ${\rm Obj}\big({\rm Hilb}(G)\big) = \Rep\big( C_0( G) \big)$. Each object is determined by a Radon measure and a measurable Hilbert bundle on $G$, denoted $(d\mu,\CH_g)$;
        \item Simple objects are given by $(\delta_x, \CL_g)$, where $\delta_x$ is a Dirac measure on $G$ and $\CL_g$ is a rank-$1$ Hilbert bundle, which gives a Hilbert space
        \be  \CH \cong \int_G^\oplus  \delta_x(g) \  \CL_g  = \CL_x \cong \bbC\ee
        For simplicity, we will refer the simple object as $\delta_x$.
        \item Morphisms between two representations are bounded intertwiners;
        \item There is a tensor product structure given by convolution on both Hilbert bundles and the measure,
        \be \ba
        m: G\times G\to G \ &, \ (g,h)\mapsto gh \\
        d\mu^*\nu &:= m_*(d\mu\times d\nu)\\
        (d\mu, \CH_g) \otimes (d\nu,\CK_g) &= (d\mu^*\nu, (\CH*\CK)_g ) \\
        (\CH*\CK)_g &= \int_G^{\oplus} d\lambda_g(h,h')\ \CH_{h}\otimes \CK_{h'}
        \ea \ee
        where $d\lambda_g$ is the conditional measure of product measure $d\mu\times d\nu$ supported on $m^{-1}(g)\subset G\times G$. For the simple objects, this convolution reads
        \be  \delta_x \otimes \delta_y = \delta_{xy} \ .  \ee
    \end{itemize}
\end{exm}

The definition requires \calgs to be norm complete, therefore the topology of $\mathscr{B}(\CH)$ is imperative. There are 3 topologies one can take: norm topology, strong operator topology (SOT) and weak operator topology (WOT).
\begin{defn}
    Suppose $\CH$ is a separable Hilbert space and $\mathscr{B}(\CH)$ the algebra of bounded operators. Suppose there is a sequence $\{a_n\}_{n\in \bbN}\subseteq \mathscr{B}(\CH)$.

    The sequence $\{a_n\}_{n\in \bbN}$ converges to $a$ in norm topology if and only if
    \be   \lim_{n\to \infty}||(a-a_n)|| = 0 \ .  \ee

    The sequence $\{a_n\}_{n\in \bbN}$ converges to $a$ in strong operator topology (SOT) if and only if
    \be  \forall \ket{\psi}\in \CH, \ \lim_{n\to \infty}||(a-a_n)\ket{\psi}|| = 0 \ .  \ee

    The sequence $\{a_n\}_{n\in \bbN}$ converges to $a$ in weak operator topology (WOT) if and only if
    \be  \forall \ket{\psi},\ket{\phi}\in \CH, \ \lim_{n\to \infty}||\bra{\phi}(a-a_n)\ket{\psi}|| = 0 \ .  \ee
\end{defn}
Suppose $B$ is a subalgebra of $\mathscr{B}(\CH)$, then taking closures under different topologies would give a sequence of inclusions
\be  \overline{B}^{||\cdot ||} \subseteq \overline{B}^{\rm SOT} \subseteq  \overline{B}^{\rm WOT}  \ . \ee
With this we can define a von Neumann algebra.
\begin{defn}
    A von Neumann algebra is a unital $C^*$-subalgebra of $\mathscr{B}(\CH)$ that is closed under SOT. A $W^*$-algebra is a \calg which is isomorphic to a von Neumann algebra.
\end{defn}
Another popular definition of von Neumann algebra by double commutant is entailed in the following theorem:
\begin{theorem}[Double commutant theorem]
    Suppose $A\subseteq \mathscr{B}(\CH)$ is a $C^*$-subalgebra and $a\in \mathscr{B}(\CH)$. Then the following statements are equivalent:
    \begin{enumerate}
        \item $a\in (A')'$, where we define the $'$ to be the commutant, $K' := \{x\in  \mathscr{B}(\CH)|\forall k\in K, \  xa =ax\} $;
        \item there exists a sequence $\{a_n\}_{n\in \bbN}$ s.t. $a = \lim_{n\to \infty} a_n$ in SOT;
        \item there exists a sequence $\{a_n\}_{n\in \bbN}$ s.t. $a = \lim_{n\to \infty} a_n$ in WOT.
    \end{enumerate}
\end{theorem}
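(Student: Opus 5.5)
The plan is to establish the cycle $(2)\Rightarrow(3)\Rightarrow(1)\Rightarrow(2)$. Here the sequences $\{a_n\}$ are understood to lie in $A$. We also assume, as in the definition of a von Neumann algebra just above, that $A$ acts non-degenerately on $\CH$ (for instance, that $A$ contains the identity). Without this hypothesis $1\in A''$ need not be a limit of elements of $A$, and the statement fails. Throughout, $\CH$ is separable, as in the preceding definition of the operator topologies.

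The first two implications are routine. $(2)\Rightarrow(3)$ holds because $|\langle\phi|(a-a_n)|\psi\rangle|\le\|\phi\|\,\|(a-a_n)\psi\|$, so SOT convergence implies WOT convergence. For $(3)\Rightarrow(1)$, take $x\in A'$. The maps $b\mapsto\langle\phi|(xb-bx)|\psi\rangle=\langle x^*\phi|b|\psi\rangle-\langle\phi|b|x\psi\rangle$ are WOT-continuous, so the commutant $\{x\}'$ is WOT-closed. Each $a_n\in A$ commutes with $x$, hence so does the WOT-limit $a$. Since $x\in A'$ was arbitrary, $a\in A''$.

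The substance is $(1)\Rightarrow(2)$, which I would prove by von Neumann's amplification argument. First take $a\in A''$ and a single vector $\xi\in\CH$. Let $P$ be the projection onto $K=\overline{A\xi}$. The subspace $K$ is $A$-invariant, and since $A$ is $*$-closed, $P\in A'$. Non-degeneracy gives $\xi\in K$. Because $a$ commutes with $P$, we get $a\xi=aP\xi=Pa\xi\in K$, so $a\xi$ is a norm limit of vectors $b\xi$ with $b\in A$. For finitely many vectors $\xi_1,\dots,\xi_m$, I would apply the same argument to the diagonal algebra $A^{(m)}=\{b\oplus\cdots\oplus b\}$ acting on $\CH^{\oplus m}$ and the vector $\xi_1\oplus\cdots\oplus\xi_m$. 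Its commutant consists of $m\times m$ matrices with entries in $A'$, so its double commutant is the diagonal copy of $A''$, and hence $a^{(m)}$ lies in it. This shows that $a$ lies in the SOT closure of $A$ defined by nets, with basic neighbourhoods determined by finitely many vectors.

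The step I expect to be the main obstacle is passing from nets to sequences, since the SOT is not metrizable on all of $\mathscr{B}(\CH)$. I would first invoke the Kaplansky density theorem: the unit ball of $A$ is SOT-dense in the unit ball of its SOT closure. This is proved via the Cayley transform or continuous functional calculus on self-adjoint elements, together with a $2\times2$ matrix trick for general elements. With this, $a$ with $\|a\|\le1$ can be approximated by elements $b$ of $A$ with $\|b\|\le1$. Next, on a norm-bounded set the SOT is metrized by $d(x,y)=\sum_j 2^{-j}\|(x-y)e_j\|$, where $\{e_j\}$ is a countable dense set of unit vectors in $\CH$; this uses the separability of $\CH$. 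Choosing $a_n$ in the unit ball of $A$ with $d(a_n,a)<1/n$ then gives a sequence converging to $a$ in the SOT. Rescaling handles general $\|a\|$, which completes $(1)\Rightarrow(2)$ and closes the cycle.
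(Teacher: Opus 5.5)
The paper gives no proof of this statement. It is quoted in Appendix~A as standard operator-algebra background, so there is no argument of the paper's to compare yours against. Your proof is correct and follows the standard route. You get $(2)\Rightarrow(3)\Rightarrow(1)$ from WOT-closedness of commutants. You get $(1)\Rightarrow{}$SOT-density by nets from von Neumann's amplification argument. You then pass to sequences using Kaplansky density together with metrizability of the SOT on norm-bounded subsets of $\mathscr{B}(\mathcal{H})$ for separable $\mathcal{H}$.

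Your version also supplies the hypotheses under which the printed statement is actually true.
\begin{itemize}
\item The $a_n$ must lie in $A$. Otherwise (2) and (3) hold trivially for every $a$ (take $a_n=a$), and the equivalence fails.
\item Non-degeneracy is needed. For $A=\{0\}$ one has $\{0\}''=\mathbb{C}1$, yet $1$ is not a limit of elements of $A$.
\item The sequential form, as opposed to nets, genuinely needs both separability and Kaplansky's theorem. SOT-convergence on a dense set of vectors implies SOT-convergence only for norm-bounded sequences, so the bound coming from Kaplansky is essential.
\end{itemize}
(The paper's definition of the commutant also has a typo: $xa=ax$ should read $xk=kx$.)

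Two small steps are worth writing out. First, ``non-degeneracy gives $\xi\in\overline{A\xi}$'': for $b\in A$ one has $b(1-P)\xi=(1-P)b\xi=0$, so $(1-P)\xi\in(A\mathcal{H})^\perp=0$. Second, the amplified algebra $A^{(m)}$ is again non-degenerate on $\mathcal{H}^{\oplus m}$, so the single-vector step applies to it. For that step you only need the inclusion $a^{(m)}\in (A^{(m)})''$, which is immediate from $(A^{(m)})'=M_m(A')$. You do not need the full identification of $(A^{(m)})''$ with the diagonal copy of $A''$.
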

This theorem basically gives 3 equivalent definitions of von Neumann algebra.

In the scope of this work, a relevant example of von Neumann algebra is the space of essentially bounded measurable functions.
\begin{exm}[$L^{\infty}(X,\mu)$]
    For a measurable space $X$ with measure $\mu$, we naturally have a separable Hilbert space $L^2(X,\mu)$ of square-integrable functions on $X$. The von Neumann algebra we want to consider is $L^{\infty}(X,\mu)$, the space of essentially bounded measurable functions on $X$. The norm of such space is given by the ``almost everywhere bound'',
    \be ||f|| := \inf_{C\in \bbR_+} \{   |f(x)|\le C \text{ for almost every} \ x\in X \}  \ . \ee
    This is a prototypical example of Abelian von Neumann algebra. 
    According to Theorem \ref{thm:GelfandNaimark}, $L^{\infty}(X,\mu)$ must be isomorphic to $C_0(Y)$ for some locally compact Hausdorff space $Y$. The space $Y$ turns out to be the Stone space of the quotient of the $\sigma$-algebra of $\mu$-measurable sets modulo the ideal of $\mu$-null sets~\cite{Farah:2019}.
\end{exm}

\section{Hilbert (Bi)modules, Connes Fusion and Correspondences}\label{app:bimod}
Other than the (bi)modules of \calgs, one can also consider the Hilbert (bi)modules~\cite{Kaplansky:1953kln}.
\begin{defn}
    Suppose $A$ and $B$ are \calgs, an {\rm $A$-$B$-Hilbert bimodule} is an $A$-$B$-bimodule $M$ with a sesquilinear form
    \be  \langle\cdot | \cdot \rangle : M\times M \to B \ ,  \ee
    s.t. $\forall a\in A, b\in B$ and $x,y\in M$
    \be\ba
    &\langle x | y \rangle  = \langle y | x \rangle^* \ ;  &\langle x | yb \rangle = \langle x | y \rangle b \ ;\\
    &\langle ax | y \rangle  = \langle x | a^*y \rangle  \ ;      &\langle x | x \rangle\ge 0 \ ;
    \ea\ee
    and $M$ is complete with respect to the norm $||x||= ||\expval{x|x}||^{1/2}$. We denote the set of $A$-$B$-Hilbert bimodules $\mathbf{Bim}(A,B)$.
\end{defn}

Given the bimodule structure, a natural question is if the relative tensor product
\be \otimes_B : \mathbf{Bim}(A,B) \times \mathbf{Bim}(B,C)\to \mathbf{Bim}(A,C)  \ee
could still be constructed. This is plausible through the following steps.

For two Hilbert bimodules $M\in \mathbf{Bim}(A,B)$ and $N\in\mathbf{Bim}(B,C) $, the relative tensor product $M\otimes _B N$ is constructed by the following steps (we take $m_1,m_2,m\in M, n_1,n_2,n\in N, b\in B$)
\begin{itemize}
    \item take a tensor product of algebra $M\otimes N$;
    \item the sesquilinear form is given by
    \be  \langle m_1\otimes n_1 | m_2\otimes n_2 \rangle  = \langle n_1| \expval{m_1|m_2}_M \ket{n_2}  \ ,  \ee
    note that the bracket $\expval{m_1|m_2}_M\in B$;
    \item quotient out the equivalence relation $mb\otimes n-m\otimes bn = 0$;
    \item take the norm completion.
\end{itemize}

\begin{defn}[Adjointable Operators]
Suppose $M,N$ are (right) Hilbert $A$-modules. A linear map $T: M \to N$ is said to be {\rm adjointable} if there exists a map $T^*: N \to M$ such that the following inner product identity holds for all $x \in M$ and $y \in N$:
\[
\langle Tx, y \rangle_N = \langle x, T^* y \rangle_M
\]
We denote the set of all adjointable operators from $M$ to $N$ as $\CL_A(M, N)$. If $M = N$, we write $\CL_A(M)$.
\end{defn}

\begin{defn}[$C^*$-Correspondence]
An $A-B$ {\rm correspondence} is a right Hilbert $B$-module $M$ equipped with a $*$-homomorphism 
\[
\phi: A \to \CL_A(M) \ .
\]
We denote the set of $A-B$ correspondences ${\rm Corr}(A,B)$.
\end{defn}
\begin{exm}
    A simple example of correspondence is $\Rep(A) = {\rm Corr}(A,\bbC)$.
\end{exm}
It is easy to deduce that there is a natural tensor product structure (Rieffel tensor~\cite{Rieffel1974InducedRO})
\be
-\otimes_B - : {\rm Corr}(A,B) \times {\rm Corr}(B,C) \to {\rm Corr}(A,C) \ .
\ee
Thus an $A-B$ correspondence $(M,\phi)$ naturally induces a map
\be\ba\label{eq:repbrepa}
(M,\phi): \Rep(B) = {\rm Corr}(B,\bbC) &\to {\rm Corr}(A,\bbC) =\Rep(A)\\
(\CH_g , d\mu) &\mapsto  M\otimes_B(\CH_g , d\mu) \ .
\ea\ee

\section{Continuous Tensor Category}\label{app:continuoustc}
The intuition of continuous tensor category constructed in~\cite{Marin-Salvador:2025stc} is to use the representation category of certain \calg to formulate continuously parameterized categories with monoidal structures. The data of a continuous tensor category $\CC$ is given by
\begin{itemize}
    \item A \calg $A$, which gives the information of objects ${\rm Obj}(\CC) = {\rm Obj}\big(\Rep(A)\big)$; 
    \item An $A-A\otimes A$ correspondence $(M,\phi)$ which gives tensor product
    \be
    (M,\phi) : \Rep(A) \otimes \Rep(A) \cong \Rep(A\otimes A) \to \Rep(A) \ ,
    \ee
    by left tensor product with respect to $A\otimes A$, just as in~\eqref{eq:repbrepa};
    \item A $\bbC-A$ correspondence giving the unit;
    \item An associator $\alpha$, which gives
    \be\ba
    \alpha: (M,\phi)\otimes_{A\otimes A}\Big\{ &\Big[ (M,\phi)\otimes_{A\otimes A}( R_1 \otimes R_2) \Big] \otimes R_3 \Big\} \\ &\to (M,\phi)\otimes_{A\otimes A} \Big\{R_1\otimes\Big[(M,\phi)\otimes_{A\otimes A}(R_2\otimes R_3)\Big]\Big\}
    \ea\ee
    for every $R_{1,2,3}\in \Rep(A)$, which is formally an intertwiner;
    \item Left and right unitors given by related intertwiners.
\end{itemize}

\section{Multiplicative Bundle Gerbe}\label{app:multbdgb}
In this section we introduce the concept of multiplicative bundle gerbe~\cite{Murray:1996bdg}. This can be considered a higher version of Hermitian linde bundle on a space endowed with group multiplication and coherence conditions.

We first introduce the definition of a bundle gerbe.
\begin{defn}[Bundle gerbe]
    A bundle gerbe $\mathscr{G}$ over a smooth manifold $M$ is given by the following data:
    \begin{enumerate}
        \item a surjective submersion $\varphi:Y\to M$;
        \item a $U(1)$-principle bundle
        \be L\to Y\times_M Y \,, \ee
        over the fiber product;\footnote{%
        $Y^{[2]}:=Y \times_M Y = \{(y_1, y_2) \in Y^2 \mid \varphi(y_1)=\varphi(y_2)\}$, higher fiber products are defined similarly by adding $Y$ factors.}
        \item an isomorphism of $U(1)$-bundles on $Y^{[3]} := Y\times_M Y\times_M Y$,
        \be  \rho: L_{12} \otimes L_{23} \to  L_{13}\,, \ee
        where $L_{i j}$ is the pullback of $L$ to the $(i,j)$-th factor of $Y^{[3]}$; 
        \item the isomorphism is associative over $Y^{[4]} := Y\times_M Y\times_M Y\times_M Y$, namely,
        \be \rho_{1,3,4}\circ(\rho_{1,2,3}\otimes {\rm id} )  = \rho_{1,2,4} \circ( {\rm id}\otimes \rho_{2,3,4} ) \ .\ee
    \end{enumerate}
\end{defn}
For our purpose, we take the base manifold $M$ to be the Lie group $G$. To incorporate the group multiplication structure $m:G\times G \to G$ and associativity, we require the bundle gerbe to host the corresponding structures, rendering the bundle gerbe multiplicative.
\begin{defn}[Multiplicative bundle gerbe]
A multiplicative bundle gerbe $(\mathscr{G},\mathscr{M},\alpha)$ over $G$ is a bundle gerbe $\mathscr{G}$ on $G$ with
\begin{enumerate}
    \item an equivalence of gerbes (an invertible \emph{gerbe bimodule})
    \be\label{eq:scrm} \mathscr{M}: p_1^* \mathscr{G} \otimes  p_2^* \mathscr{G}  \to m^*\mathscr{G}\,, \ee
    over $G\times G$;
    \item an isomorphism $\alpha$ implementing associativity of group multiplication, namely
    \be
    \begin{tikzcd}
\mathscr{G}_1\otimes \mathscr{G}_2\otimes \mathscr{G}_3 \arrow[d, "{{\rm id }\otimes \mathscr{M}_{2,3}}"'] \arrow[rr, "{\mathscr{M}_{1,2}\otimes {\rm id}}"] &  & \mathscr{G}_{12}\otimes \mathscr{G}_3 \arrow[d, "{\mathscr{M}_{12,3}}"] \arrow[lld, "{\alpha_{1,2,3}}", Rightarrow] \\
\mathscr{G}_{1}\otimes \mathscr{G}_{23} \arrow[rr, "{\mathscr{M}_{1,23}}"']                                                                          &  & \mathscr{G}_{123}                                                                                                      
\end{tikzcd}\ee
where we have abbreviated $p_{k}^*\mathscr{G}$ as $\mathscr{G}_k$, $m_{i,j}^*\mathscr{G}$ as $\mathscr{G}_{ij}$ and $m^*_{1,2}\circ m_{12,3}^*\mathscr{G} \cong  m_{2,3}^*\circ m_{1,23}^*\mathscr{G}$ as $\mathscr{G}_{123}$;
    \item the isomorphism shall satisfy the pentagon identity
    \be \ba &[{\rm id} \circ ({\rm id} \circ \alpha_{1,2,3})] \circ [ \alpha_{1,23,4}\circ {\rm id} ]\circ[{\rm id}\circ (\alpha_{1,2,3}\otimes {\rm id})] \\ = &[\alpha_{12,3,4}\circ {\rm id}] \circ [\alpha_{1,2,34}\circ{\rm id} ]\,, \ea\ee
    which presents two identical ways to transform
    \be
    \mathscr{M}_{123,4} \circ (\mathscr{M}_{12,3}\otimes {\rm id}_{\mathscr{G}_4}) \circ (\mathscr{M}_{1,2}\otimes {\rm id}_{\mathscr{G}_3}\otimes {\rm id}_{\mathscr{G}_4} )\,,
    \ee
    into
    \be
    \mathscr{M}_{1,234} \circ ({\rm id}_{\mathscr{G}_1}\otimes \mathscr{M}_{2,34}) \circ ({\rm id}_{\mathscr{G}_1}\otimes {\rm id}_{\mathscr{G}_2} \otimes \mathscr{M}_{3,4} ) \ .
    \ee
\end{enumerate}
\end{defn}
The classification of multiplicative bundle gerbes is given in Proposition 5.2 of~\cite{Carey:2004xt}:
\begin{prop}[Classification of multiplicative bundle gerbe]
    Let $G$ be a compact, connected Lie group. Then there is an isomorphism between $H^4(B G , \mathbb{Z})$ and the space of isomorphism classes of multiplicative bundle gerbes $(\mathcal{G},\mathscr{M},\alpha)$ on $G$.
\end{prop}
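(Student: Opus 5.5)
The plan is to identify multiplicative bundle gerbes with degree-three classes in the sheaf cohomology of the simplicial manifold $NG$ (the nerve of $G$, with $NG_p = G^{\times p}$ and the usual face maps built from $m$ and the projections). From there I pass to $H^4(BG,\bbZ)$ by the exponential sequence and the geometric realization $|NG|\simeq BG$. This is the route of Brylinski--McLaughlin and of~\cite{Carey:2004xt}, and I follow it step by step.

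\textbf{Step 1: multiplicative gerbes as simplicial cocycles.} I would consider the double complex computing $H^\bullet(NG,\underline{U(1)})$. In position $(p,q)$ it is the Čech complex with values in the sheaf of smooth $U(1)$-valued functions on $G^{\times p}$, with the simplicial differential $\delta=\sum(-1)^i d_i^*$ in the $p$-direction. A total degree-3 cocycle has four components:
\begin{itemize}
\item $(1,2)$: a Čech 2-cocycle on $G$, i.e.\ a gerbe $\mathscr{G}$ (equivalently the function $\rho$ of the bundle gerbe);
\item $(2,1)$: a Čech 1-cochain $\upsilon$ on $G\times G$ with $\check d\upsilon=\delta\rho = p_1^*\rho - m^*\rho + p_2^*\rho$, i.e.\ the bimodule $\mathscr{M}$;
\item $(3,0)$: a function on $G^{\times 3}$, i.e.\ the associator $\alpha$;
\item the cocycle condition at $(4,0)$, which is precisely the pentagon identity.
\end{itemize}
The $(0,3)$ component vanishes because $NG_0$ is a point. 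Conversely, every such cocycle produces a multiplicative gerbe $(\mathscr{G},\mathscr{M},\alpha)$. I then check that total coboundaries correspond to isomorphisms of multiplicative gerbes: a $(1,1)$-cochain is a stable isomorphism of gerbes, a $(2,0)$-cochain is a 2-morphism of bimodules, and their compatibility with $\alpha$ is the coboundary equation. This gives a bijection between isomorphism classes of multiplicative gerbes and $H^3(NG,\underline{U(1)})$.

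\textbf{Step 2: exponential sequence.} The sequence $0\to\bbZ\to\underline{\bbR}\to\underline{U(1)}\to 0$ gives the long exact sequence
\[
\cdots\to H^3(NG,\underline{\bbR})\to H^3(NG,\underline{U(1)})\to H^4(NG,\bbZ)\to H^4(NG,\underline{\bbR})\to\cdots .
\]
The cohomology of the fine sheaf $\underline{\bbR}$ on $NG$ collapses onto the $E_1$ row $q=0$. It therefore equals smooth (equivalently continuous) group cohomology $H^p_{\rm cont}(G,\bbR)$, which vanishes for $p>0$ when $G$ is compact, by averaging over the Haar measure. Hence $H^3(NG,\underline{U(1)})\cong H^4(NG,\bbZ)$.

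\textbf{Step 3: comparison with $BG$.} For locally constant coefficients, the cohomology of a simplicial space agrees with the cohomology of its fat realization, which is a model of $BG$. This gives $H^4(NG,\bbZ)\cong H^4(BG,\bbZ)$, and composing with Steps 1 and 2 yields the claimed isomorphism. Connectedness of $G$ enters when matching with the formulation in terms of Dixmier--Douady classes. For connected $G$, the $E_2^{2,1}$ and $E_2^{3,0}$ terms (homomorphism-type data on $H^1(G,U(1))$ and $H^2$ of lower rows) impose no extra constraints. As a result, the edge map sends each class to the DD class $\tau(k)\in H^3(G,\bbZ)$, consistently with~\eqref{eq:ordinarytransgression}.

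\textbf{Main obstacle.} The hardest point is Step 1 at the level of isomorphism classes rather than cocycles. Bundle gerbes form a 2-category, and $\mathscr{M}$ is only an equivalence (a gerbe bimodule), not an isomorphism on a common cover. One must therefore refine surjective submersions to good covers on every $G^{\times p}$ compatibly with the face maps. I would first pass to a simplicial good cover of $NG$, which exists by a standard construction. I would then show that stable isomorphisms and 2-morphisms of bimodules account for exactly the total coboundaries, using Murray--Stevenson's result that stable isomorphism classes of bundle gerbes are Čech $H^2(\,\cdot\,,\underline{U(1)})$, applied levelwise.
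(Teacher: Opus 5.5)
The paper gives no argument for this proposition. It quotes Prop.~5.2 of \cite{Carey:2004xt} and uses it as a black box. Your proposal supplies an actual proof, and it is correct. It is essentially the Brylinski-style simplicial argument that the cited classification rests on:
\begin{itemize}
\item identify multiplicative gerbes with total-degree-3 cocycles in the $\check{\rm C}$ech double complex of $\underline{U(1)}$ on a good simplicial cover of $NG$;
\item use $H^p(NG,\underline{\bbR})\cong H^p_{\rm cont}(G,\bbR)=0$ for $p>0$ (Haar averaging), so the exponential sequence gives $H^3(NG,\underline{U(1)})\cong H^4(NG,\bbZ)$;
\item compare with the fat realization to reach $H^4(BG,\bbZ)$.
\end{itemize}
You have also correctly located the real technical work. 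Submersions and bimodules must be refined to a simplicial good cover so that stable isomorphisms and 2-morphisms of bimodules account exactly for total coboundaries. One small imprecision: the $(3,0)$ component is a $\check{\rm C}$ech $0$-cochain on the cover of $G^{\times 3}$, locally a section of a line bundle. It is not a global function, since $\check d\alpha=\pm\delta\upsilon$ need not vanish.

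What you should remove is the closing discussion of connectedness in Step~3. Nothing in Steps~1--3 uses connectedness, so your argument proves the statement for every compact Lie group. For finite $G$ it reduces to $H^3_{\rm grp}(G,U(1))\cong H^4(BG,\bbZ)$, so there is no need to find a role for the hypothesis.

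Moreover, the suggestion that for connected $G$ the $E^{2,1}$ and $E^{3,0}$ terms are inert, so the class is governed by its Dixmier--Douady class, is wrong. For $G=U(1)$ one has $H^3(U(1),\bbZ)=0$ while $H^4(BU(1),\bbZ)\cong\bbZ$. The underlying gerbe is therefore trivial, and the class is carried entirely by the multiplicative data: the $E^{2,1}$ term, a line bundle on $U(1)\times U(1)$, together with $\alpha$. The edge map to $H^3(G,\bbZ)$ is indeed the ordinary transgression. But it is not injective in general (for a torus it is identically zero, since products of degree-2 classes die on $\Sigma T$), so it cannot be what pins the class down.
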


\section{Fell Bundle on Groupoids}\label{app:fellbundle}
Our construction of Drinfeld center relies heavily on the Fell bundle over groupoids and its associated $C^*$-algebra. We will introduce this notion here.

While a group mathematically formalizes the symmetries of a single object, a groupoid generalizes this concept to represent the symmetries of a family of objects.
\begin{defn}
    A groupoid $\mathcal{G}$ is a small category in which every morphism is invertible.
\end{defn}

For a groupoid $\CG$, we denote its set of objects by $\CG^{(0)} := {\rm Obj}(\CG)$ and its set of morphisms $\CG^{(1)}:= \sqcup_{x,y\in \CG^{(0)}} {\rm Hom}_\CG (x,y) $. A groupoid is equipped with structural maps including:
\begin{itemize}
    \item The source and target maps $s, t: \mathcal{G}^{(1)} \to \mathcal{G}^{(0)}$. For every $a \in {\rm Hom}_\CG (x,y)$, we have $s(a) = x$ and $t(a) = y$.
    \item An inversion map. For every $a \in {\rm Hom}_\CG (x,y)$, the inversion map gives $a^{-1}\in {\rm Hom}_\CG (y,x)$ s.t. $a\circ a^{-1} = {\rm id}_y$ and $a^{-1}\circ a  = {\rm id}_x$.
\end{itemize} 
For $x\in \CG^{(0)}$, we will denote $\CG_x:=\{ k\in \CG | s(k) = x \}$, $\CG^y :=\{ k\in\CG | t(k) = y  \}$ and $\CG_x^y :=  \CG_x\cap \CG^y$. We will also denote the space of composable pairs of morphisms $\CG^{(2)} := \{  (a,b)\in \CG^{(1)}\times \CG^{(1)} | s(a) = t(b) \}$.

In the context of Lie groupoid, we require $\mathcal{G}$ to be a locally compact Hausdorff topological groupoid, which means both $\CG^{(0)}$ and $\CG^{(1)}$ is endowed with a topology s.t. they are both locally compact Hausdorff spaces, and the source and target maps $s,t:\CG^{(1)}\to \CG^{(0)}$ are locally proper open maps.  To perform analysis and integration, $\mathcal{G}$ is equipped with a smooth left Haar system. 
\begin{defn}
    A smooth left Haar system on a Lie groupoid $\mathcal{G}$ with a base manifold $\CG^{(0)}$ is a family of smooth Radon measures\footnote{Radon measures are Borel measures with local finiteness (evaluation on compact spaces are finite).} $\{\mu^x\}_{x \in \CG^{(0)}}$ supported on the target fibers $\mathcal{G}^x$ with the following properties:
\begin{itemize}
    \item Smoothness: For any compactly supported smooth function $f \in C_c^\infty(\mathcal{G})$, the assignment $x \mapsto \int_{\mathcal{G}^x} f(\gamma) d\mu^x(\gamma)$ defines a smooth function on the base manifold $\CG^{(0)}$.
    \item Left-invariance: The measures are strictly invariant under left translation. For any morphism $a \in {\rm Hom}_{\CG}(s(a),t(a))$, and any compactly supported smooth function $f$ on $\CG$, the integral satisfies the identity $\int_{\mathcal{G}^{s(a)}} f(\eta \gamma) d\mu^{s(a)}(\gamma) = \int_{\mathcal{G}^{t(a)}} f(\gamma) d\mu^{t(a)}(\gamma)$.
\end{itemize}
\end{defn}

\begin{exm}[$G//_{\rm Ad}G$]
    The groupoid $\CG = G//_{\rm Ad}G$ is given by the following data:
    \begin{itemize}
        \item $\CG^{(0)} = G$;
        \item ${\rm Hom}_{\CG}(g,h) = \{ x\in G| xgx^{-1} = h \}$, for any morphism $k\in {\rm Hom}_{\CG}(g,kgk^{-1})$, we denote it as $(g,k)$;
        \item Multiplication $(xgx^{-1},y)\cdot (g,x) = (g,xy) \ , \ \forall g,x,y\in G$.
    \end{itemize}
\end{exm}

A Fell bundle can be thought as a bundle on the groupoid space, with additional $C^*$-algebraic structures~\cite{Fell:1969eml,Kumjian:1998umx}. To concretely define the Fell bundle, we need to introduce some analytic notions.
\begin{defn}[\cite{Muhly:2008pdm}]
    An upper semicontinuous Banach bundle over a topological space $X$ is a topological space $\mathcal{A}$ together with a continuous, open surjection $p : \mathcal{A} \to X$ and complex Banach space structures on each fiber $\mathcal{A}_x := p^{-1}(\{x\})$ satisfying the following axioms:
    \begin{enumerate}
    \item The map $a \mapsto \|a\|$ is upper semicontinuous from $\mathcal{A}$ to $\mathbb{R}^+$ (i.e., for all $\epsilon > 0$, $\{ a \in \mathcal{A} : \|a\| \ge \epsilon \}$ is closed).
    \item If $\mathcal{A}^{(2)} := \{ (a,b) \in \mathcal{A} \times \mathcal{A} : p(a) = p(b) \}$, then $(a,b) \mapsto a + b$ is continuous from $\mathcal{A}^{(2)}$ to $\mathcal{A}$.
    \item For each $\lambda \in \mathbb{C}$, $a \mapsto \lambda a$ is continuous from $\mathcal{A}$ to $\mathcal{A}$.
    \item If $\{a_i\}$ is a net in $\mathcal{A}$ such that $p(a_i) \to x$ and such that $\|a_i\| \to 0$, then $a_i \to 0_x$ (where $0_x$ is the zero element in $\mathcal{A}_x$).
\end{enumerate}
\end{defn}
After incorporating the $C^*$-structure, we can define the Fell bundle below.

\begin{defn}[Fell bundle~\cite{Muhly:2008pdm}]
    A Fell bundle is a upper semicontinuous Banach bundle $p:\CA\to \CG$, where $\CG$ is a locally compact Hausdorff groupoid with the following structure:
    \begin{enumerate}
        \item A continuous bilinear associative multiplication $m:\CA^{(2)}\to \CA$ and a continuous involution $*:\CA\to \CA$ s.t.
        \be\ba  &\forall a,b\in \CA \ , \\ &p(m(a,b)) = p(a)p(b)  \\  &m(a,b)^* = m(b^*,a^*) \\ &p(b^*) = p(b)^{-1} \ .  \ea\ee
        \item For each $u\in \CG^{(0)}$, the Banach space $p^{-1}(u)$ is a $C^*$-algebra.
        \item For each $\xi \in \CG^{(1)}$, $p^{-1}(\xi)$ is a $p^{-1}(t(\xi) )-p^{-1}(s(\xi))$ correspondence.
    \end{enumerate}
\end{defn}
\begin{remark}
A Fell bundle can be equivalently formulated as a groupoid in the category $\mfr{Corr}$, where ${\rm Obj}(\mfr{Corr}) = \{\text{\calgs}\}$ and ${\rm Hom}_{\mfr{Corr}}(A,B) = {\rm Corr}(A,B)$, the set of $A-B$ correspondences introduced in Appendix~\ref{app:bimod}.
\end{remark}

The transition from the geometric bundle to a global analytic object is achieved by constructing a convolution algebra. We define the space of continuous, compactly supported sections of the bundle denoted $\Gamma_c(\mathcal{G}, \mathcal{A})$.

This space is endowed with a *-algebra structure using the left Haar system. For any sections $f, g \in \Gamma_c(\mathcal{G}, \mathcal{A})$, the convolution product is defined via integration over the groupoid fibers:
\be (f * g)(x) = \int_{\mathcal{G}^{t(x)}} f(y)g(y^{-1}x) \ d\lambda^{t(x)}(y) \ .\ee
The involution on sections is defined by intertwining the bundle's intrinsic involution with the groupoid's inversion map
\be f^*(x) = f(x^{-1})^*\ . \ee

The universal $C^*$-algebra of the Fell bundle is denoted $C^*(\mathcal{A}, \mathcal{G})$, and is defined as the completion of the $*$-algebra $\Gamma_c(\mathcal{G}, \mathcal{A})$ with respect to the universal $C^*$-norm, which can be obtained by taking the supremum over all continuous non-degenerate $*$-representations on Hilbert spaces.

Suppose we have a Fell line bundle $\CA\to \CG$ where the fiber over both objects and morphisms of the groupoid are isomorphic to $\bbC$, the Fell bundle can be modeled by a $U(1)$-central extension of $\CG$. 
\begin{defn}[$U(1)$-Central extension of groupoid~\cite{suzuki:2018cle}]\label{def:centextu1}
    A $U(1)$-central extension of a groupoid $\CG$ is defined as a groupoid $\widetilde{\CG}$ s.t.
    \begin{itemize}
        \item The objects are given by $\widetilde{\CG}^{(0)} =\CG^{(0)}$.
        \item The set of morphisms is given by a $U(1)$-principal bundle $\pi: \widetilde{\CG}^{(1)} \to  \CG^{(1)}$, and therefore inherits a right $U(1)$ action.
        \item Suppose $z_1,z_2\in U(1)$ and $(a_1,a_2)\in \widetilde{\CG}^{(2)}$ are composable morphisms, then $(a_1z_1) \circ (a_2z_2) = (a_1\circ a_2)(z_1z_2)$.
    \end{itemize}
\end{defn}
In certain mathematical literatures such $U(1)$-central extension is also abbreviated as a sequence~\cite{Suzuki:2002hrm}
\be
  U(1) \stackrel{}{\hookrightarrow} \widetilde{\CG} \stackrel{}{\longrightarrow} \CG  \ .
\ee
For a $U(1)$-central extension $\widetilde{\CG}$ over $\CG$, a Fell line bundle $p:\CA\to \CG$ can be constructed as
\be  \CA := \widetilde{\CG} \times \bbC  / \{ (a ,\lambda)\sim (az,z^{-1}\lambda)  \} \ . \ee
The multiplication and involution are given by
\be \ba &\forall (\gamma_1,\gamma_2)\in \CG^{(2)} \ , \ a_i\in p^{-1}(\gamma_i) \ (i\in \{ 1,2\} ) \\
&m((a_1,z_1) , (a_2,z_2)) = (a_1\circ a_2 , z_1z_2) \\
&(a_1,z_1)^* = (a_1^{-1},\overline{z_1}) \ .
\ea\ee

Such central extensions are classified up to isomorphism by the second groupoid cohomology group $H^2(\mathcal{G}, U(1))$\footnote{The groupoid cohomology of groupoid $\mfr{X}$ can be constructed as the $\check{\rm C}$ech cohomology of the simplicial space $X_\bullet$, where $X$ is an atlas of $\mfr{X}$ with a representable submersion $X\to \mfr{X}$. The second cohomology class $H^2(\CG,U(1))$ characterizes $U(1)$-gerbe over $\CG$, which is different from $U(1)$-central extension over $\CG$. However, when $H^1(G,U(1)) = H^2(G,U(1))= 0$, $H^2(\CG,U(1))$ is isomorphic to the isomorphism class of $U(1)$-central extensions.}~\cite{Behrend:2008}. Choose a representative $\nu$ of the class $[\nu]\in H^2(\CG,U(1))$, the corresponding \calg of the Fell bundle admits a twisted multiplication given by the following procedure:
\begin{enumerate}
    \item For each $g\in G$, we can take a section $\theta_g : \CG_g\to \widetilde{\CG}_g$
    \item For each $g\in G$, the cocycle $\nu$ gives a map 
    \be\ba
    &\nu_g: \CG_g\times \CG^g \to U(1) \\ 
    &\nu_g(x,y) = \theta_{s(y)}(xy)^{-1} \theta_g(x) \theta_{s(y)}(y) \ ,
    \ea\ee
    where the second line holds because the RHS lives in $U(1) \cong U(1) \times \{s(y) \}\subset \widetilde{\CG}_{s(y)}$.
    \item The maps $\{\nu_g\}$ satisfies the cocycle identity
    \be
    \nu_g\left(x, y\right) \nu_h\left(x y, z\right)=\nu_g\left(x, y z\right) \nu_h\left(y, z\right) \ , \ \forall x \in G_g, y \in G_h^g, z \in G^h
    \ee
    \item The \calg of Fell bundle sections $C^*(\CG,\widetilde{\CG})$ has the multiplication given by the $\nu$-twisted convolution
    \be
    (f*g)(x) = \int_{\CG^{t(x)}} \nu_{s(x)}(y,y^{-1}x) f(y) g(y^{-1}x) \  d\lambda^{t(x)}(y)
    \ee
    as well as the involution
    \be
    f^*(x) = \overline{\nu_{s(x)}(x,x^{-1}) f(x^{-1})} \ .
    \ee
\end{enumerate}

From now on, we will denote the Fell line bundle determined by $[\nu]\in H^2(\CG, U(1))$ as $\Sigma_\nu$. We intend to discuss the representation category of $C^*(\CG, \Sigma_\nu)$. Theorem 2.1 of~\cite{Ionescu:2013} has classified the irreducible representations of such $C^*$-algebra, which in our case translates to the following:
\begin{corollary}
    Suppose $G$ is a locally compact Hausdorff Lie group, $\Sigma_\nu$ is a Fell line bundle over $\CG = G//_{\rm Ad} G$. For any $g\in G$, if $L$ is an irreducible representation of $C^*(C_G(g) ,\Sigma_\nu)$ ($C_G(g) = \CG_g^g$ is the centralizer of $g$ in $G$), then the induced representation ${\rm Ind}_{C_G(g)}^{\CG}L$ is an irreducible representation of $C^*(\CG ,\Sigma_\nu)$.
\end{corollary}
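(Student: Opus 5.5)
The statement as written is the appendix-level Corollary: for a locally compact Hausdorff Lie group $G$ and a Fell line bundle $\Sigma_\nu$ over $\CG=\gmg$, induction from an irreducible representation $L$ of $C^*(C_G(g),\Sigma_\nu)$ gives an irreducible representation of $C^*(\CG,\Sigma_\nu)$. The plan is to obtain this as a direct specialization of Theorem 2.1 of~\cite{Ionescu:2013}, quoted earlier as the theorem on irreducible representations of Fell bundles. The work is to check that $\gmg$ and $\Sigma_\nu$ satisfy its hypotheses, and to identify the stability group $\CG_g^g$ and the restricted bundle.

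\emph{Groupoid hypotheses.} Identify $\CG^{(1)}\cong G\times G$ via $(g,x)$, with $s(g,x)=g$ and $t(g,x)=xgx^{-1}$, both continuous. Then $\CG$ is locally compact Hausdorff because $G$ is. For the Haar system, I take
\[
\lambda^h=\text{push-forward of the Haar measure }d\Gamma\text{ under } x\mapsto (x^{-1}hx,\,x),
\]
which is supported on $\CG^h$. Left invariance reduces to left invariance of $d\Gamma$. Continuity of $h\mapsto\int f\,d\lambda^h$ follows from dominated convergence for $f\in C_c(\CG)$. This is the action-groupoid Haar system, so the check is standard. Second countability of the Lie group $G$ gives separability of $C_c$ sections.

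\emph{Bundle hypotheses.} $\Sigma_\nu$ is the line bundle associated to the $U(1)$-central extension $\widetilde{\CG}$, which is locally trivial. It is therefore a continuous (hence upper semicontinuous) Banach bundle with fibres $\bbC$, and it is separable.

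\emph{Identifying the isotropy data.} At $u=g$, the isotropy $\CG_g^g=\{(g,x): xgx^{-1}=g\}$ is the centralizer $C_G(g)$, a closed subgroup. The restriction $\Sigma_\nu|_{C_G(g)}$ is a Fell line bundle over a group, and its $C^*$-algebra is exactly $C^*(C_G(g),\Sigma_\nu)$. Under the section $\theta_g$ described in the Fell-bundle appendix, this is the $\nu_g$-twisted group $C^*$-algebra. Theorem 2.1 of~\cite{Ionescu:2013} then gives irreducibility of ${\rm Ind}_{\CG_g^g}^{\CG}L$.

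\emph{Main obstacle.} The subtle point is that Ionescu--Williams induce through the orbit of $u$, implemented by the $C^*(\CG_u^u)$-valued inner product on $\Gamma_c(\CG_u,\Sigma)$. This requires $\CG_u$ to carry a quasi-invariant measure compatible with the Haar system. Here $\CG_g\cong G$, so Haar measure on $G$ serves as this measure, and the modular function has to be included if $G$ is not unimodular. I would verify this by checking that the induced Hilbert space is $L^2$ of sections over the conjugacy class $G/C_G(g)$. This identification is what later shows that different points of $[g]$ give equivalent induced representations.
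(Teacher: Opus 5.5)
Your proposal is correct and follows the paper's own argument. The paper likewise obtains the corollary as a direct specialization of Theorem 2.1 of \cite{Ionescu:2013}, remarking only that $G//_{\rm Ad}G$ is a locally compact groupoid with a Haar system induced from the Haar measure, that the Fell line bundle is separable, and that the stability group $\CG_g^g$ is the centralizer $C_G(g)$.

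Your explicit Haar system $\lambda^h$ and your second-countability remark fill in details the paper asserts without proof. Your final ``main obstacle'' paragraph is not needed, because the inducing module and its measure-theoretic ingredients are handled inside the cited theorem.
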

\begin{remark}
    Following the construction in~\cite{Ionescu:2013}, one can show that if $g' = t\big( (g,h) \big) = hgh^{-1}$, then the two induced representations ${\rm Ind}_{C_G(g)}^{\CG}L$ and ${\rm Ind}_{C_G(g')}^{\CG}L$ are isomorphic. This fact could also be observed from other literatures~\cite{Buss:2026}.
\end{remark}
\begin{remark}
    The generalized Effros-Hahn conjecture (partially) proved in~\cite{Ionescu:2009} states that if $\Sigma_\nu$ and $\CG$ are second countable amenable locally compact Hausdorff groupoids, then any irreducible representations of $C^*(\CG ,\Sigma_\nu)$ is always an induced representation of certain $L\in {\rm IrRep}\left(C^*\left( C_G\left(g\right),\Sigma_\nu \right)\right)$. The amenability condition for $G//_{\rm Ad} G$ is equivalent to amenability for $G$, which includes all compact groups and all Abelian groups. For non-compact cases there are non-amenable groups like $SL(2,\bbR)$.
\end{remark}

On $C_G(g) = \CG_g^g$, the Fell bundle gives a restricted $\nu_g: C_G(g)\times C_G(g)\to U(1)$ satisfying cocycle identity, which exactly labels the projective representation of $C_G(g)$.

With the above analysis, we can state the following proposition:
\begin{prop}
    Suppose $G$ is a second countable, amenable, locally compact Hausdorff Lie group, let $\CG = G//_{\rm Ad}G$ be the conjugation groupoid and let $\Sigma_\nu$ be a $U(1)$ central extension twisted by $\nu\in H^2(\CG,U(1))$. Then every irreducible representation of $C^*(\CG,\Sigma_\nu)$ is labeled by a conjugacy class $[g]$ and a projective irreducible representation of the centralizer $\rho_{\nu_g} \in {\rm IrRep}(C_G(g))$.
\end{prop}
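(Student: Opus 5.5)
The plan is to show, in order, that induced representations are irreducible, that every irreducible is induced, that irreducibles of the stabilizer algebra are exactly the twisted projective irreducibles of the centralizer, and that the labels $([g],R)$ are well defined and separate isomorphism classes.

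\emph{Existence and exhaustion.} By the Corollary above (Theorem 2.1 of~\cite{Ionescu:2013}), for every $g\in G$ and every irreducible representation $L$ of $C^*(C_G(g),\Sigma_\nu)$, the induced representation ${\rm Ind}_{C_G(g)}^{\CG}L$ is an irreducible representation of $C^*(\CG,\Sigma_\nu)$. The hypotheses needed there are that $\CG=G//_{\rm Ad}G$ is a second countable locally compact Hausdorff groupoid with a Haar system and that $\Sigma_\nu$ is separable. Both hold: $\CG^{(1)}\cong G\times G$ with $s$ the first projection, so the left Haar system is just Haar measure on each fibre. For the converse I use the generalized Effros--Hahn result of~\cite{Ionescu:2009} quoted in the preceding Remark. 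Since $G$ is amenable, $\CG$ is amenable, so $C^*(\CG,\Sigma_\nu)$ equals its reduced algebra and every irreducible representation is unitarily equivalent to ${\rm Ind}_{C_G(g)}^{\CG}L$ for some $g$ and some irreducible $L$. Together these make induction from stabilizers a surjection onto the irreducible spectrum.

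\emph{Identifying the stabilizer algebra.} Restricting the Fell line bundle to the isotropy group $\CG_g^g=C_G(g)$ gives a central extension $1\to U(1)\to \widetilde{C_G(g)}\to C_G(g)\to1$. After choosing the section $\theta_g$, its multiplication is governed by the 2-cocycle $\nu_g$, which satisfies the cocycle identity~\eqref{eq:cocycle}. Hence $C^*(C_G(g),\Sigma_\nu)$ is the twisted group $C^*$-algebra $C^*(C_G(g),\nu_g)$. This is a standard fact: the twisted convolution in Appendix~\ref{app:fellbundle} restricted to a group is exactly the Busby--Smith twisted convolution. Its nondegenerate representations correspond bijectively to $\nu_g$-projective unitary representations of $C_G(g)$, by integrating $f\mapsto\int f(x)U_x\,dx$. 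Irreducibility is preserved both ways, because the commutants agree. So $L$ ranges exactly over the irreducible $\nu_g$-projective representations $\rho_{\nu_g}$.

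\emph{Well-definedness of labels.} Two points need checking. First, the label does not depend on the representative of $[g]$: by the Remark following the Corollary, ${\rm Ind}_{C_G(g)}L\cong{\rm Ind}_{C_G(hgh^{-1})}(h\cdot L)$, where $h\cdot L$ is transported along conjugation by $h$ and the twist is shifted by a coboundary, so $\nu_{hgh^{-1}}$ and $\mathrm{Ad}_h^*\nu_g$ are cohomologous. Second, distinct labels give inequivalent representations. For this I look at the central subalgebra $C_0(G)\subset M(C^*(\CG,\Sigma_\nu))$, or better at $C_0(G)^G$ acting through the unit space. In ${\rm Ind}_{C_G(g)}L$ it acts by the measure class supported on the orbit $[g]$, since the induced Hilbert space is $L^2(\CG_g)\otimes_{C_G(g)}\CH_L$ fibred over $[g]\cong G/C_G(g)$. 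So the conjugacy class is an invariant. Within a fixed orbit I would apply the imprimitivity theorem for Fell bundles (Muhly--Williams equivalence theorem): the restriction of $\Sigma_\nu$ to the transitive subgroupoid $\CG|_{[g]}$ is equivalent to $\Sigma_\nu|_{C_G(g)}$. Induction is then the Rieffel correspondence for this equivalence bimodule, and so it is injective on equivalence classes. This gives the bijection with pairs $([g],\rho_{\nu_g})$.

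\emph{Expected obstacle.} The main obstacle is the injectivity and orbit-restriction step. Orbits $[g]$ need not be open, and restricting to a closed invariant subset requires that $\CG|_{[g]}$ be a locally closed invariant subgroupoid on which the orbit map $G/C_G(g)\to[g]$ is a homeomorphism. For Lie groups acting smoothly by conjugation this holds, since orbits are embedded submanifolds when $G$ is compact and closed in the reductive abelian-factor cases. I would argue this via Glimm/Effros: $\CG^{(0)}/\CG$ is $T_0$ because conjugation orbits of a compact group, or of one with abelian noncompact factors, are closed. A secondary check is that $\nu_g$ is continuous or at least Borel, so that the twisted group algebra formalism applies. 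This follows from $\Sigma_\nu$ being a locally trivial line bundle over the isotropy.
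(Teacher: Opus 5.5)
Your existence, stabilizer-identification and well-definedness steps coincide with the paper's argument: Theorem 2.1 of \cite{Ionescu:2013}, the Effros--Hahn remark under amenability, the conjugation remark, and restriction of the cocycle to $C_G(g)$. Your injectivity step, via class functions and the Muhly--Williams equivalence, is an addition the paper does not make.

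The genuine gap is in the exhaustion step, and the paper's argument shares it. The generalized Effros--Hahn theorem of \cite{Ionescu:2009} is a statement about \emph{primitive ideals}: $\ker\pi=\ker{\rm Ind}\,L$ for some $g$ and $L$. It does not give unitary equivalence $\pi\cong{\rm Ind}\,L$, and passing from one to the other needs control of the orbit space. Your assertion that conjugation orbits of Lie groups are locally closed is false at the generality of the statement. Take the Mautner group $G=\mathbb{C}^2\rtimes\mathbb{R}$ with $t\cdot(z_1,z_2)=(e^{it}z_1,e^{i\theta t}z_2)$ and $\theta\notin\mathbb{Q}$; it is solvable, hence amenable. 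Conjugation by $(w,s)$ sends $(z,0)$ to $\big((e^{is}z_1,e^{i\theta s}z_2),0\big)$. So for $z_1z_2\neq0$ each conjugacy class is a dense line in a closed invariant $2$-torus $T\subset G$. With $\nu=1$, multiplication by $C_0(G)$ on $L^2(T,{\rm Haar})$ together with translation by the irrational flow is a covariant pair, irreducible by ergodicity. Its spectral measure on $\CG^{(0)}=G$ gives zero mass to every conjugacy class, whereas each ${\rm Ind}\,L$ has spectral measure concentrated on a single class. So this irreducible representation carries no label $([g],R)$, and the proposition as literally stated is false. Its kernel is induced, as Effros--Hahn guarantees, but equally from every (dense) class in $T$. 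So even primitive ideals are not labeled by single conjugacy classes.

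The statement can therefore only be proved for the class the paper actually intends: $G$ compact connected, possibly times $\mathbb{R}$ and $GL(1,\mathbb{C})$ factors. There $G/{\rm Ad}$ is Hausdorff and saturations of compact sets are compact. In that setting your own orbit-restriction argument closes the gap, and it makes Effros--Hahn and amenability unnecessary.
\begin{itemize}
\item Let $\pi$ be irreducible. The class functions $C_0(G)^G\cong C_0(G/{\rm Ad})$ are central multipliers, and $C_0(G)^G\cdot C^*(\CG,\Sigma_\nu)$ is dense. Schur's lemma and nondegeneracy then force $\pi$ to restrict to evaluation at a single class $[g]$.
\item Hence $\pi$ kills the ideal of the open invariant set $G\setminus[g]$ and factors through $C^*(\CG|_{[g]},\Sigma_\nu)$.
\item Since $G/C_G(g)\to[g]$ is a homeomorphism, $C^*(\CG|_{[g]},\Sigma_\nu)$ is Morita equivalent to $C^*(C_G(g),\nu_g)$.
\item The Rieffel correspondence then gives $\pi\cong{\rm Ind}\,L$ for a unique irreducible $L$.
\end{itemize}
This yields surjectivity and injectivity of the labeling at the level of unitary equivalence, not just of primitive ideals.
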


\section{Injective Transgression}\label{app:injectivetransgression}
In this appendix we focus on one question: under what circumstances is the transgression
\be \tau: H^4(BG,\bbZ) \to H^3_G(G,\bbZ) = H^3(EG\times_G G,\bbZ) = H^3(LBG,\bbZ)\ee
an injective homomorphism?

Suppose the Lie group $G$ is compact connected, then its cohomology $H^4(BG,\bbZ)$ is torsion-free (see Theorem 6 of~\cite{Henriques:2016ipo}). This means using $\bbQ$-valued cohomology does not loose any infomation. By taking tensor with $\bbQ$, we get the rational transgression
\be  \tau \stackrel{\otimes \bbQ}{\longrightarrow} \tau_\bbQ : H^4(BG,\bbQ) \to H^3(LBG,\bbQ) \ . \ee

For $\bbQ$-valued cohomology, there is a convenient tool known as the Sullivan model~\cite{ViguPoirrier1976TheHT}. The idea of Sullivan model is to take a commutative differential graded algebra (cdga) whose cohomology reproduces $H^\bullet(X,\bbQ)$ for some space $X$. In the following, we will assume $V$ to be a graded vector space, and denote $V^+:= \bigoplus_{k\ge 1} V^k $. We also let $\Lambda^k V$ denote the space spanned by $\{ v_1\wedge \cdots \wedge v_k|v_i\in V \}$, and $\Lambda V= \bigoplus_k \Lambda^k V$.
\begin{defn}[Relative Sullivan model]
    A relative Sullivan model is a morphism between commutative differential graded algebras (cdgas)
    \be  (B,d_B) \hookrightarrow (B\otimes \Lambda V,d) \ , \ b\mapsto b\otimes 1 \ . \ee
\end{defn}
\begin{defn}
    A Sullivan model is a relative Sullivan model with
    \be   (B,d_B) = (\bbK , 0)  \hookrightarrow (\Lambda V,d) \ . \ee
\end{defn}
For connected compact Lie group $G$, the rational cohomology of $BG$ admits a polynomial presentation
\be
H^\bullet(BG,\bbQ) \cong \bbQ[x_1,\cdots ,x_r] \ , \ {\rm deg}(x_i)\in 2\bbZ
\ee
and the Sullivan model is 
\be (\Lambda V,d) = (\bbQ[x_1,\cdots x_r],0)  \ . \ee
The Sullivan model of the free loop space $LBG$ is given by (See Section 15 of~\cite{Felix:2001rht})
\be (\Lambda V \otimes \Lambda(sV),D) = (\bbQ[x_1,\cdots ,x_r]\otimes \Lambda(sx_1,\cdots sx_r), 0) \ , \ee
where there is a map $S$ relating $x_i$ and $sx_i$
\be\ba &S:x_i\mapsto sx_i \ , \  {\rm deg}(sx_i) = {\rm deg}(s_i)-1\\ &S:sx_i \mapsto 0 \ , \ S(x_ix_j) = S(x_i)x_j + x_i S(x_j)  = (sx_i) x_j + x_i (sx_j) \ . \ea\ee
The free loop space cohomology is therefore
\be H^3(LBG , \bbQ) \cong  \bbQ[x_1,\cdots ,x_r]\otimes \Lambda(sx_1,\cdots sx_r) \ . \ee
\begin{prop}
    Suppose $G$ is a compact connected Lie group, the map $S:H^4(BG,\bbQ) \to H^3(LBG,\bbQ)$ is injective.
\end{prop}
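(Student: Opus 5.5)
The plan is to compute $S$ explicitly on a basis of $H^4(BG,\bbQ)$. The key structural input is that, by the Sullivan model of $LBG$ recalled above, $H^\bullet(LBG,\bbQ)$ is the free graded-commutative algebra $\bbQ[x_1,\dots,x_r]\otimes\Lambda(sx_1,\dots,sx_r)$ with zero differential. So an element of $H^3(LBG,\bbQ)$ vanishes if and only if its coefficients vanish in the monomial basis. Injectivity then reduces to a statement about polynomials.

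First, I would split the generators by degree. Since $\deg x_i\in 2\bbZ_{>0}$ and we work in degree $4$, only generators of degree $2$ and $4$ matter. Call the degree-$2$ generators $z_1,\dots,z_a$; they come from the central torus, i.e.\ the $U(1)$ factors of $\pi_1$. Call the degree-$4$ generators $y_1,\dots,y_b$. Then every $p\in H^4(BG,\bbQ)$ is uniquely written as
\be p=\sum_i c_i\, y_i+q(z), \qquad q(z)=\sum_{j\le k} b_{jk}\, z_jz_k .\ee

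Second, I would use that $S$ is a derivation with $S(x_i)=sx_i$. This gives
\be S(p)=\sum_i c_i\, sy_i+\sum_j \frac{\partial q}{\partial z_j}\, sz_j .\ee
Here the $sy_i$ are odd generators of degree $3$, and each $\frac{\partial q}{\partial z_j}\,sz_j$ lies in $\bbQ[z]^{(1)}\otimes\Lambda^1(sz)$. All of the monomials $sy_i$ and $z_k\,sz_j$ are distinct basis elements of the free algebra. Hence $S(p)=0$ forces $c_i=0$ for all $i$ and $\partial q/\partial z_j=0$ for all $j$. Euler's identity over $\bbQ$, $2q=\sum_j z_j\,\partial q/\partial z_j$, then gives $q=0$. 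So $p=0$ and $S$ is injective. Characteristic zero is used exactly here.

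Finally, I would connect this to the integral transgression $\tau$ of the previous paragraph. By the cited torsion-freeness of $H^4(BG,\bbZ)$ for compact connected $G$, the map $H^4(BG,\bbZ)\to H^4(BG,\bbQ)$ is injective. If $\tau_\bbQ=S$, then $\tau(k)=0$ implies $S(k\otimes 1)=0$, hence $k\otimes 1=0$, hence $k=0$.

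The main obstacle is this identification: that the transgression, namely evaluation along $S^1\times LBG\to BG$ followed by integration over $S^1$, is modeled on cohomology exactly by the derivation $S$ of the Vigué-Poirrier–Sullivan model, including that the model's differential vanishes when $BG$ is formal with polynomial cohomology. I would take this from Section 15 of Félix–Halperin–Thomas, where $s$ is defined as the slant product with the fundamental class of $S^1$. The polynomial computation itself is routine.
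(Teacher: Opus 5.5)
Your proposal is correct and is essentially the paper's proof. You write a degree-4 class as a linear combination of degree-4 generators plus a quadratic form in the degree-2 generators, apply the derivation $S$, and use that $H^\bullet(LBG,\bbQ)$ is free with zero differential to conclude that all coefficients vanish. Your Euler-identity step $2q=\sum_j z_j\,\partial q/\partial z_j$ is a compact version of the paper's coefficient-by-coefficient linear-independence argument. The identification $\tau_\bbQ=S$ and the integral lift, which you flag as the main obstacle, are handled in the paper as separate follow-up propositions and are not needed for this statement.
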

\begin{proof}
    Suppose in $H^4(BG,\bbQ)\cong \bbQ[x_1,\cdots,x_r]$, the degree $2$ elements in $\{x_i\}_{i\in\bbZ\cap[1,r]}$ are denoted $\{y_i\}_{i\in \bbZ\cap [1,m]}$ and degree $4$ elements are denoted $\{z_i\}_{i\in \bbZ\cap [1,n]}$.
    A generic $\alpha\in  H^4(BG,\bbQ)$ takes the form
    \be \alpha = \sum_{a=1}^n A_a z_a + \sum_{1\le j\le k\le m}B_{jk}y_jy_k. \ee
    Applying $S$ map, we get
    \be  S(\alpha)=\sum_{a=1}^n A_a\,sz_a+\sum_{1\le j<k\le m}B_{jk}(y_j\,sy_k+y_k\,sy_j)+\sum_{j=1}^m2B_{jj}y_j\,sy_j. \ee
    If $S(\alpha) = 0$, then all $A_a = 0$ and for each $l$, we shall have
\be 2B_{\ell\ell}y_\ell+\sum_{j<\ell}B_{j\ell}y_j+\sum_{k>\ell}B_{\ell k}y_k=0.
\ee
Knowing all $y_i$s are linearly independent, we proved $S(\alpha) = 0 \Rightarrow \alpha = 0$. The map $S:H^4(BG,\bbQ) \to H^3(LBG,\bbQ)$ is thus injective.
\end{proof}

We then demonstrate that the $S$ map is actually the transgression. Consider the map
\be  {\rm ev}: S^1\times LBG \ , \ (t,\gamma) \mapsto \gamma(t) \ ,  \ee
where $t\in S^1$ and $\gamma$ is a loop on $BG$.The pullback of a cohomology class $\alpha\in H^4(BG,\bbQ)$ gives
\be {\rm ev}^* \alpha \in H^4(S^1\times LBG,\bbQ) \ . \ee
Integrating out along the $S^1$ circle gives our desired rational transgression~\cite{waldorf2016transgression}
\be \tau_\bbQ(\alpha) = \int_{S^1} {\rm ev}^*\alpha \in H^3(LBG,\bbQ) \ . \ee
How does this process look on the Sullivan model? A Sullivan model for $S^1$ is just $(\Lambda (u),0)$, so the Sullivan model for $S^1\times LBG$ is \be (\Lambda(u),0)\otimes (\bbQ[x_1,\cdots x_r]\otimes  \Lambda [sx_1,\cdots,sx_r],0)\ , \ee
giving the pullback of evaluation
\be {\rm ev}^* : v \mapsto  v+ u \, S(v) \ .\ee
Evaluation around $S^1$ is merely to take the coefficient of $u$, thus
\be \int_{S^1} {\rm ev}^* (\alpha) = S(\alpha) = \tau(\alpha) \ .  \ee
Thus we have:
\begin{prop}
    Suppose $G$ is a compact connected Lie group, the rational transgression $\tau_\bbQ:H^4(BG,\bbQ) \to H^3(LBG,\bbQ)$ is injective.
\end{prop}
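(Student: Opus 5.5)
The plan is to reduce the claim to the preceding proposition, which shows that the derivation $S:H^4(BG,\bbQ)\to H^3(LBG,\bbQ)$ of the Sullivan model is injective. The remaining work is to check carefully that $\tau_\bbQ$ and $S$ agree as maps on cohomology, and not merely on generators.

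\textbf{Step 1: models.} I will fix the models. For compact connected $G$, the minimal model of $BG$ is $(\bbQ[x_1,\dots,x_r],0)$ with even-degree generators. The model of $LBG$ is
\be
(\bbQ[x_1,\dots,x_r]\otimes\Lambda(sx_1,\dots,sx_r),D),
\ee
as in Section 15 of~\cite{Felix:2001rht}. There $D(sx_i)=-S(dx_i)=0$, because $d=0$ on $BG$. Hence $D=0$, and the cohomology of $LBG$ is the whole algebra. In particular, distinct monomials of total degree $3$ (namely $sz_a$ and $y_j\,sy_k$) are linearly independent classes in $H^3(LBG,\bbQ)$. This is the fact the previous proof implicitly uses.

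\textbf{Step 2: the evaluation map.} I will model ${\rm ev}:S^1\times LBG\to BG$. The standard construction of the free loop model realises ${\rm ev}^*$ on generators as
\be
x_i\mapsto x_i+u\,sx_i, \qquad u^2=0 .
\ee
Since ${\rm ev}^*$ is a cdga morphism and $S$ is a derivation, the expansion of ${\rm ev}^*(v)$ for a product $v$ of generators gives $v+u\,S(v)$, with no higher terms because $u^2=0$. Fibre integration $\int_{S^1}$ on the model is extraction of the $u$-coefficient, which is compatible with the K\"unneth isomorphism. Therefore
\be
\tau_\bbQ(\alpha)=\int_{S^1}{\rm ev}^*\alpha=S(\alpha) \quad \text{for every } \alpha\in H^4(BG,\bbQ),
\ee
and injectivity of $\tau_\bbQ$ follows immediately from the previous proposition.

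\textbf{Main obstacle.} The main difficulty is justifying that the cdga map $x_i\mapsto x_i+u\,sx_i$ genuinely models the evaluation map, including signs and the identification of the loop-space generators $sx_i$. Equally, one must check that fibre integration corresponds to taking the $u$-coefficient rather than some twisted version of it. I would first cite the explicit description of the free loop space model as a pullback of the path fibration in~\cite{Felix:2001rht}, which produces exactly this formula for ${\rm ev}^*$. As a cross-check I would use the real de Rham picture of $BG$ via invariant polynomials $P$: there the transgression of $P(F)$ is the Chern--Simons type class, whose leading term is $\iota_{\partial_t}$ of the curvature. This matches $S$ acting as a derivation on generators.

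\textbf{Step 3: integral statement.} Finally, I will note that $H^4(BG,\bbZ)$ is torsion-free, by Theorem 6 of~\cite{Henriques:2016ipo}. Hence $H^4(BG,\bbZ)\hookrightarrow H^4(BG,\bbQ)$. Combining this with naturality of $\tau$ under coefficient change gives injectivity of the integral $\tau$ as well, which is what Section~\ref{sec:anomalous} requires.
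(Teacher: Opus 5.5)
Your proposal is correct and follows essentially the same route as the paper: injectivity of $S$ from the preceding proposition, combined with the Sullivan-model identification $\tau_\bbQ(\alpha)=\int_{S^1}{\rm ev}^*\alpha=S(\alpha)$ via ${\rm ev}^*(v)=v+u\,S(v)$ and extraction of the $u$-coefficient. Your extra checks only make explicit what the paper leaves implicit: that $D=0$, so the degree-3 monomials are independent classes, and that $u^2=0$ kills higher terms when expanding products. Your Step 3 is exactly the paper's subsequent proposition on the integral transgression.
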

Since there is no torsion part in $H^4(BG,\bbQ)$, we are safe to claim:
\begin{prop}
    Suppose $G$ is a compact connected Lie group, the integral transgression $\tau:H^4(BG,\bbZ) \to H^3(LBG,\bbZ) = H^3_G(G,\bbZ)$ is injective.
\end{prop}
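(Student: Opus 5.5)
The plan is to deduce integral injectivity from the rational statement just proved, using the fact that $H^4(BG,\bbZ)$ has no torsion. The argument is a diagram chase on the commutative square
\be
\begin{tikzcd}
H^4(BG,\bbZ) \arrow[r, "\tau"] \arrow[d, "\iota"'] & H^3(LBG,\bbZ) \arrow[d, "\iota"] \\
H^4(BG,\bbQ) \arrow[r, "\tau_\bbQ"] & H^3(LBG,\bbQ)
\end{tikzcd}
\ee
Here the vertical maps $\iota$ are the coefficient-change maps induced by $\bbZ\hookrightarrow\bbQ$.

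The first step is to justify that this square commutes. The integral transgression is defined as $\tau(\alpha)=\int_{S^1}{\rm ev}^*\alpha$, with ${\rm ev}:S^1\times LBG\to BG$. Pullback ${\rm ev}^*$ is natural in the coefficient ring. Integration along the fibre $S^1$ can be realised as the slant product with the fundamental class $[S^1]\in H_1(S^1,\bbZ)$ in the Künneth decomposition. Both operations commute with the ring homomorphism $\bbZ\to\bbQ$, so $\iota\circ\tau=\tau_\bbQ\circ\iota$. I would also record the standard homotopy equivalence $LBG\simeq EG\times_G G$ (with $G$ acting by conjugation). This gives the identification $H^3(LBG,\bbZ)=H^3_G(G,\bbZ)$ appearing in the statement.

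The second step is the chase itself. Suppose $\tau(\alpha)=0$. Then $\tau_\bbQ(\iota\alpha)=\iota\tau(\alpha)=0$, and the preceding Proposition (rational injectivity, via the Sullivan model map $S$) gives $\iota\alpha=0$. The kernel of $\iota:H^4(BG,\bbZ)\to H^4(BG,\bbQ)\cong H^4(BG,\bbZ)\otimes\bbQ$ is exactly the torsion subgroup. The identification with the tensor product holds by universal coefficients, since $\bbQ$ is flat and $BG$ has finitely generated cohomology in each degree. For compact connected $G$ this torsion subgroup vanishes by Theorem 6 of~\cite{Henriques:2016ipo}, so $\alpha=0$.

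The main obstacle is not the chase but the input it relies on: that the algebraic map $S$ on the Sullivan model of $LBG$ really computes $\tau_\bbQ$. Concretely, ${\rm ev}^*$ must be modelled by $v\mapsto v+u\,S(v)$, and fibre integration must be modelled by extracting the coefficient of $u$. I would take this from the Vigué-Poirrier–Sullivan description of the free loop space model (Section 15 of~\cite{Felix:2001rht}), where $S$ is the derivation induced by the circle action. I would also check that the normalisation of $u$ matches the orientation of $[S^1]$. A sign or scalar discrepancy would be harmless for injectivity, as long as the map is a nonzero multiple of $S$.
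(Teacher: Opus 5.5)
Your argument is correct and follows the paper's approach: the paper also deduces integral injectivity from the Sullivan-model proof of rational injectivity together with the torsion-freeness of $H^4(BG,\bbZ)$ for compact connected $G$. You additionally make explicit the commuting coefficient-change square and the identification of $\ker\bigl(H^4(BG,\bbZ)\to H^4(BG,\bbQ)\bigr)$ with the torsion subgroup, both of which the paper's one-line deduction leaves implicit.
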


\bibliographystyle{JHEP}
\bibliography{biblio.bib}

\end{document}